\newcommand{\mcsg}{\mathit{mcsg}}
\newcommand{\sfh}{{\sf h}}
 \newcommand{\dom}{\mathrm{Dom}}
 \newcommand{\ran}{\mathrm{Ran}}
 \newcommand{\pos}{\mathit{pos}}
\newtheorem{observation}{Observation}
\newtheorem{convention}{Convention}
\begin{document}

\title[One or Nothing]{One or Nothing: Anti-unification over the Simply-Typed Lambda Calculus}
\titlenote{Funded by  Czech Science Foundation Grant No. 22-06414L and Math$_{LP}$ project (LIT-2019-7-YOU-213) of the  Linz Institute of Technology and the state of Upper Austria}
\author{David M. Cerna}
\authornotemark[2]
\email{dcerna@cs.cas.cz}
\orcid{0000-0002-6352-603X}
\affiliation{%
  \institution{Czech Academy of Sciences Institute of Computer Science (CAS ICS)}
  \city{Prague}
  \country{Czechia}}
\affiliation{%
  \institution{Research Institute for Symbolic Computation (RISC), JKU}
  \city{Linz}
  \country{Austria}
}
\author{Michal Buran}
\email{michal.buraan@gmail.com}
\orcid{0000-0002-1196-3799}
\affiliation{%
  \institution{Research Institute for Symbolic Computation (RISC), JKU}
  \city{Linz}
  \country{Austria}
}

\renewcommand{\shortauthors}{ D.M. Cerna \& M. Buran}

\begin{abstract}
Generalization techniques have many applications, including template construction, argument generalization, and indexing. Modern interactive provers can exploit advancement in generalization methods over expressive type theories to further develop proof generalization techniques and other transformations. So far, investigations concerned with anti-unification (AU) over $\lambda$-terms and similar type theories have focused on developing algorithms for well-studied variants. These variants forbid the nesting of generalization variables, restrict the structure of their arguments, and are \textit{unitary}. Extending these methods to more expressive variants is important to applications. We consider the case of nested generalization variables and show that the AU problem is \textit{nullary} (using \textit{capture-avoiding} substitutions), even when the arguments to free variables are severely restricted.  
\end{abstract}

\begin{CCSXML}
<ccs2012>
  <concept>
      <concept_id>10003752.10003790.10003800</concept_id>
      <concept_desc>Theory of computation~Higher order logic</concept_desc>
      <concept_significance>500</concept_significance>
      </concept>
  <concept>
      <concept_id>10003752.10003790.10003794</concept_id>
      <concept_desc>Theory of computation~Automated reasoning</concept_desc>
      <concept_significance>300</concept_significance>
      </concept>
  <concept>
      <concept_id>10003752.10003790.10003798</concept_id>
      <concept_desc>Theory of computation~Equational logic and rewriting</concept_desc>
      <concept_significance>300</concept_significance>
      </concept>
 </ccs2012>
\end{CCSXML}

\ccsdesc[500]{Theory of computation~Higher order logic}
\ccsdesc[300]{Theory of computation~Automated reasoning}
\ccsdesc[300]{Theory of computation~Equational logic and rewriting}

\keywords{Lambda calculus, Anti-unification, generalization, Nullary, Type 0}

\maketitle

\section{Introduction}
{\em Anti-unification} (generalization)~\cite{CernaK23}, first introduced in the 1970s by Plotkin and Reynolds~\cite{Plotkin70,Reynolds70}, is a process that derives from a set of symbolic expressions a new symbolic expression possessing certain commonalities shared between its members. For example, $f(x)$ generalizes the terms $f(a)$ and $f(b)$.  The concept has been applied to inductive theorem proving based on tree grammars~\cite{EberhardH15}, library compression~\cite{DBLP:journals/pacmpl/CaoKNWTP23} recursion scheme detection in functional programs~\cite{DBLP:journals/fgcs/BarwellBH18}, inductive synthesis of recursive functions~\cite{DBLP:books/sp/Schmid03}, learning fixes from software code repositories~\cite{DBLP:journals/corr/abs-1803-03806,DBLP:journals/pacmpl/BaderSP019}, preventing bugs and misconfiguration~\cite{246332}, as well as uses within the fields of natural language processing and linguistics~\cite{DBLP:books/sp/Galitsky19,EasyChair:203}. Applications, most related to our work are proof generalization~\cite{pfenning1991unification} and higher-order term indexing~\cite{10.1145/1614431.1614437}, which use anti-unification over typed $\lambda$-terms and their extensions~\cite{DBLP:books/daglib/0032840}. 

 A relatively recent work~\cite{DBLP:conf/rta/CernaK19} provided a general framework for anti-unification over typed $\lambda$-terms where the authors restricted the structure of the generalizations to so-called top-maximal shallow $\lambda$-terms. Shallow means that free variable occurrences cannot occur as an argument to other free variable occurrences, i.e., the term $\lambda y.X(y)$ is shallow while $\lambda y.X(X(y))$ is not. Top maximality is difficult to formally describe but easy to illustrate by example. Consider the terms $s=\lambda x.f(f'(x))$ and $t=\lambda x.f(h(x))$. The terms $\lambda x.Y(f(f'(x)),f(h(x)))$ and  $\lambda x.f(Y(f'(x),h(x)))$ are both generalizations of $s$ and $t$, but only the latter term explicitly captures that $s$ and $t$ have $f$ at the same position. Essentially, top-maximality means free variables generalize the differences closest to the root of the term tree and avoid generalizing similarities at positions closer to the root. Important for this work, the pattern generalization of two terms is always top-maximal shallow as constants, and $\lambda$ abstractions cannot occur as arguments to the free variables. 

In~\cite{DBLP:conf/rta/CernaK19}, the authors considered variants of anti-unification over $\lambda$-terms based on the concept of \textit{restricted term} introduced for \textit{function-as-constructor unification}~\cite{DBLP:journals/amai/LibalM22}. Restricted terms allow a generalization of the pattern restriction without compromising the unitarity of unification, i.e., there is a single \textit{most general unifier}. In addition, function-as-constructor unification requires certain relations between the arguments of free variable occurrences and between free variable occurrences to hold. Relaxing these restrictions allows one to define a few additional variants. Similarly,  anti-unification in this setting is \textit{unitary}: a unique \textit{least general generalization} exists.

Furthermore, anti-unification remains unitary even when one relaxes most of the restrictions introduced in~\cite{DBLP:journals/amai/LibalM22}. Though these results are significant, shallow AU is quite restrictive for applications such as proof generalization and templating within modern \textit{interactive theorem provers}~\cite{bertot2013interactive,MouraKADR15}. In this paper, we consider deep AU over the simply-typed $\lambda$-calculus and show that an elegant framework, as presented in earlier work concerning top-maximal shallow generalization, cannot exist because the theory is nullary (least general generalizations do not exist), even when we severely restrict arguments to the free variables; this entails that, for applications such as proof generalization, templating, and augmentation, rather than extending the existing framework into the deep setting, future investigations will have to consider extensions into more expressive type theories and computational frameworks.  

\section{Preliminaries}
\label{prelims}
 Higher-order signatures are composed of \emph{types} (denoted by $\mathcal{T}$) constructed from a set
of \emph{base types} (typically $\mathcal{B}$) using the grammar $\tau ::= \mathcal{B}\ \mid\ \tau \rightarrow \tau$. When it is possible to avoid confusion, we will abbreviate types of the form $\gamma_1\rightarrow\cdots \rightarrow \gamma_m\rightarrow \alpha$ as $\overline{\gamma_m}\rightarrow \alpha$. \emph{Variables} are separated into sets: \textit{free} variables $\mathcal{FV}$ (denoted
$X, Y, Z,  \ldots$) and \textit{bound} variables $\mathcal{BV}$ (denoted $x, y, z,\ldots$). While it is non-standard to distinguish free and bound variables, doing so simplifies the argument presented below. A similar approach was taken in~\cite{DBLP:conf/rta/CernaK19}, where terms were discussed within a variable context. \emph{Constants} are denoted $f, a, b, c, \ldots$, the set of all constants is denoted $\mathcal{C}$ and $\Sigma\subseteq \mathcal{C}$ is a \textit{signature}. Free variables, bound variables, and constants are assigned types from $\mathcal{T}$. By the symbol $\sfh$, we denote a constant or a variable.

\emph{$\lambda$-terms} (typically denoted $t,s,u,\ldots$) over a signature $\Sigma$ are constructed using the grammar  $t ::= x:\tau\mid Y:\tau \mid c:\tau \mid \lambda x:\tau.t \mid t_1\ t_2$
where $x$ is a bound variable of type $\tau\in \mathcal{T}$, $Y$ is a free variable of type $\tau\in \mathcal{T}$, and $c$ is a constant contained in $\Sigma$ of type  $\tau\in \mathcal{T}$. Note that neither constants nor variables (bound or free) can be assigned multiple types. When clear from context, we will drop type annotations. A $\lambda$-term is considered \textit{closed} when it does not contain free variables. For example, $\lambda x.\lambda y.x$ is closed and  $\lambda x.\lambda y.Z(x,y)$ is not as $Z$ is a free variable. The set of all $\lambda$-terms over the signature $\Sigma$ is denoted by $\Lambda(\Sigma)$ and the set of closed $\lambda$-terms  over the signature $\Sigma$ is denoted by $\Lambda_c(\Sigma)\subset \Lambda(\Sigma)$. When the signature is clear from context, we write $\Lambda$ ($\Lambda_c$ for closed terms). 

We assume that for all terms in $\Lambda(\Sigma)$, bound variables do not occur unabstracted and are in $\eta$-long $\beta$-normal form unless otherwise stated. That is, in addition to  $\beta$-normal form,  constants and free and bound variables are fully abstracted, i.e., $f:\gamma_1\rightarrow \gamma_2$ is written as $\lambda x.f(x)$. Given a lambda term $t\in \Lambda$, $t\downarrow_{\eta}$ denotes the $\eta$-normal form of $t$. For example, let $f:\alpha\rightarrow \alpha$ be a constant, $x:\alpha\rightarrow \alpha$ a bound variable,  then $t= \lambda x. f(\lambda y.x(y))$ is a $\lambda$-term in $\eta$-long $\beta$-normal form. The  $\eta$-normal form of $t$ is $t\downarrow_{\eta} = \lambda x. f(x)$.

Terms of the form $(\ldots (\sfh\ t_m ) \ldots t_1 )$ 
are  written as $\sfh(t_1 , \ldots , t_m)$, and terms of the form $\lambda x_1 . \ldots\lambda x_n .t$ as $\lambda x_1 , \ldots , x_n .t$. We use $\overline{x_n}$ as a short-hand for $x_1 , \ldots , x_n$. We define $\mathit{head}(t)$ for $t\in \Lambda(\Sigma)$ as $\mathit{head}(\lambda x.t') = \lambda x.$ and $\mathit{head}(\sfh(\overline{s_m})) = \sfh$. By $\mathit{occ}(\sfh, t)$ for $t\in \Lambda(\Sigma)$ we denote the number of occurrences of $\sfh$ in $t$, and by $\mathcal{FV}(t)$ (resp. $\mathcal{BV}(t)$), we denote the set of free (bound) variables occurring in $t$.

The set of \emph{positions} of a term $t$, denoted by $\pos(t)$, is the set of strings of positive integers, defined as $\pos(\lambda x.t)=\{\epsilon \}\cup \{1. p \mid  p\in \pos(t)\}$ and $\pos(\sfh(t_1,\ldots,t_n))=\{\epsilon \}\cup  \bigcup_{i=1}^n \{i. p \vert  p\in \pos(t_i)\}$, where $\epsilon$ denotes the empty string. For example, the term at position $1.1.2$ of 
$\lambda x.f(x,a)$ (i.g., $\lambda x.((f\ a)\ x)$) is $a$.

 If $p\in \mathit{pos}(s)$ and $s$ is a term, then $s|_p$ denotes the subterm of $s$ at position $p$.  For positions $p$ and $q$, we say $p<q$ if $q=p.q'$ for some nonempty string $q'$.

\emph{Substitutions} are finite sets of pairs $\{X_1\mapsto
t_1,\ldots, X_n\mapsto t_n\}$ where $X_i$ and $t_i$  have the same type, $X_i\not = t_i$, and the $X$'s are pairwise distinct free variables. Substitutions represent functions that map each $X_i$ to $t_i$ and any other variable to itself. They can be extended to
type preserving functions from terms to terms and, as usual, avoid
variable capture (variables introduced by substitutions are not captured by $\lambda$s occurring in the term). We use postfix notation for substitution applications, writing $t\sigma$ instead of $\sigma(t)$. The  \emph{domain} of a substitution $\sigma$, denoted $\dom(\sigma)$, is the set $\{ X\ \vert \ X\sigma \not = X\}$. The \emph{range} of a substitution $\sigma$, denoted $\ran(\sigma)$, is the set $\{ t\ \vert \ X\in \dom(\sigma)\ \wedge \ X\sigma = t\}$. Substitutions are denoted by lowercase Greek letters. As usual, the application $t\sigma$
affects only the free variable occurrences of $t$ whose free variable is found in $\dom(\sigma)$.

We consider two variants of anti-unification over the simply-typed lambda calculus, $\Lambda(\Sigma)$ and $\Lambda_{sp}(\Sigma)\subset \Lambda(\Sigma)$ (See Definition~\ref{def:superpattern}). Below, $\mathcal{L}\in 
\{\Lambda(\Sigma),\Lambda_{sp}(\Sigma)\}$. We will use the notation for $\mathcal{L}$-terms to specify a term from one of the two variants. A term $g\in \Lambda(\Sigma)$ is referred to as an $\mathcal{L}$-generalization of terms $s,t\in 
\Lambda_c(\Sigma)$, where $s$ and $t$ have the same type, if there exist substitutions $\sigma_1$ and $\sigma_2$ such that $g\sigma_1 
=_{\alpha\beta\eta} s$ and $g\sigma_2=_{\alpha\beta\eta}t$, $g$ is an $\mathcal{L}$-terms, 
where $=_{\alpha\beta\eta}$ means equivalent with respect to $\alpha$-equivalence, $\beta$-reduction, and $\eta$-expansion. When clear 
from context, we drop $\alpha\beta\eta$ from the equality relation. We refer to such $g$ as a solution to the 
\textit{$\mathcal{L}$-anti-unification problem ($\mathcal{L}$-AUP)} $s\triangleq_{\mathcal{L}} t$. Given a generalization $g$ of $s\triangleq_{\mathcal{L}} t$,  $\mathcal{GS}(s,t,g)$ denotes the set of pairs of substitutions $(\sigma_1,\sigma_2)$ such that $g\sigma_1= s$ and $g\sigma_2 = t$.  Furthermore, $\mathcal{GS}(s,t,g)_{gnd}\subset \mathcal{GS}(s,t,g)$ denotes the pairs of substitutions $(\sigma_1,\sigma_2)$ such that $\{X\vert X\in \mathcal{FV}(t) \wedge t\in \mathit{Ran}(\sigma_1)\cup \mathit{Ran}(\sigma_2)\}=\emptyset$.

We only consider the anti-unification problems between closed  $\lambda$-terms as generalization of general $\lambda$-terms can be reduced to generalization of closed $\lambda$-terms. For example, the free variables of the input terms can be replaced by constants and then reintroduced after the computation of the generalization; this is common practice in anti-unification literature and was also assumed by the previous investigations~\cite{DBLP:conf/rta/CernaK19}. 

The set of all  $\mathcal{L}$-generalizations of $s,t \in \Lambda_c(\Sigma)$ is denoted by $\mathcal{G}_{\mathcal{L}}(s,t)$. A quasi-ordering 
$\leq_{\mathcal{L}}$ may be defined on $\mathcal{L}$ as follows: for $\mathcal{L}$-terms $g,g'$, $g\leq_{\mathcal{L}} g'$ if and only if 
there exists $\sigma$ such that $g\sigma =_{\alpha\beta\eta} g'$. By $g <_{\mathcal{L}} g'$, we denote $g \leq_{\mathcal{L}} g'$ and 
$g'\not \leq_{\mathcal{L}} g$. A minimal complete set of $\mathcal{L}$-generalizations of $s$ and $t$, denoted $\mcsg_{\mathcal{L}}(s,t)$  is a set of terms with the following properties: 
\begin{enumerate}
\item $\mcsg_{\mathcal{L}}(s,t)\subseteq \mathcal{G}_{\mathcal{L}}(s,t)$.
\item For all $g'\in \mathcal{G}_{\mathcal{L}}(s,t)$ there exists a $g\in \mcsg_{\mathcal{L}}(s,t)$ such that $g' \leq_{\mathcal{L}} g$.
\item If $g,g'\in \mcsg_{\mathcal{L}}(s,t)$ and $g\not = g'$,then  $g  \not \leq_{\mathcal{L}} g'$ and $g'  \not \leq_{\mathcal{L}} g$.
\end{enumerate}
A subset of $\mathcal{G}_{\mathcal{L}}(s, t)$ is called {\em complete} if the first two conditions hold. In anti-unification literature~\cite{CernaK23}, theories are classified by the size and existence of the $\mcsg_{\mathcal{L}}(s,t)$: 
\begin{itemize}
\item \textit{Unitary}: The $\mcsg_{\mathcal{L}}(s,t)$ exists for all  $s,t\in  \Lambda_c(\Sigma)$ and is always singleton. 
\begin{itemize}
    \item[] \underline{Examples:} syntactic anti-unification (AU)~\cite{Plotkin70,Reynolds70}, top-maximal, shallow AU over $\lambda$-terms~\cite{DBLP:journals/jar/BaumgartnerKLV17,DBLP:conf/rta/CernaK19}, nominal AU with Atom variable~\cite{schmidtschau_et_al:LIPIcs.FSCD.2022.7} (depends on constraints).
\end{itemize}
\item \textit{Finitary}: The $\mcsg_{\mathcal{L}}(s,t)$ exists, is finite for all  $s,t\in  \Lambda_c(\Sigma)$, and there exist closed terms $s',t'\in  \Lambda_c(\Sigma)$  for which $1< |\mcsg_{\mathcal{L}}(s',t')| <\infty$. 
\begin{itemize}
    \item[] \underline{Examples:}  associative (A), commutative (C), and AC AU ~\cite{DBLP:journals/iandc/AlpuenteEEM14}, AU with a single unit equation~\cite{DBLP:journals/mscs/CernaK20}, unranked term-graphs AU~\cite{DBLP:conf/rta/BaumgartnerKLV18}, hedges~\cite{KutsiaLV14}.
\end{itemize}
\item \textit{Infinitary}: The $\mcsg_{\mathcal{L}}(s,t)$ exists for all  $s,t\in  \Lambda_c(\Sigma)$, and there exists at least one pair of closed terms for which it is infinite. 
\begin{itemize}
      \item[] \underline{Examples:}  Idempotent AU~\cite{ourpaper} and Absorption AU~\cite{ayalarincon2023equational}.
\end{itemize}
\item \textit{Nullary}: There exist  terms $s,t\in  \Lambda_c(\Sigma)$ such that $\mcsg_{\mathcal{L}}(s,t)$ does not exist. 
\begin{itemize}
      \item[] \underline{Examples:} AU with multiple unit equations~\cite{FSCD2020}, semirings and idempotent-absorption~\cite{DBLP:journals/tcs/Cerna20}.
\end{itemize}
\end{itemize}

\section{Results}
We assume a set of base types $\mathcal{B}$ that contains at least one base type, which we denote by $\alpha$. Furthermore, for 
every type $\tau\in\mathcal{T}$ constructible using $\mathcal{B}$, there is a constant symbol $c\in \Sigma$ of type $\tau$, where 
$\Sigma$ is the signature over which our problem is defined. This assumption is important for the transformation of 
$\mathcal{L}$-pattern-derived generalizations (defined below) into $\mathcal{L}$-tight generalizations 
(Definition~\ref{def:patternDerived}) presented in Lemma~\ref{lem:minimal}. This transformation requires the application of 
substitutions whose range may include variables of type $\tau$ such that the chosen signature does not contain a constant of type 
$\tau$. To avoid this situation, we assume $\Sigma$ contains at least one constant of every type in $\mathcal{T}$, allowing us to 
assume that the substitutions considered in the proof of Lemma~\ref{lem:minimal} do not introduce free variables. For the rest of 
this paper, $\Sigma$ will be the signature defined above, and all discussed $\lambda$-terms will be constructed using this 
signature. In addition to general $\lambda$-terms over $\Sigma$, we also consider so-called \textit{superpatterns}:

\begin{definition}[Superpatterns]
\label{def:superpattern}
A $\lambda$-term $t\in \Lambda$ is a \emph{superpattern} if, for every $p\in \mathit{pos}(t)$ such that  $t\vert_{p}\downarrow_{\eta} = X(\overline{s_m})$ and $X\in \mathcal{FV}$, the following conditions hold: 
\begin{itemize}
    \item For all $1\leq i\leq m$, $s_i\downarrow_{\eta}\in \mathcal{BV}(t)$, or $\mathit{head}(s_i\downarrow_{\eta})\in \mathcal{FV}$.
    \item For all $1\leq i<j\leq m$, such that  $s_i\downarrow_{\eta},s_j\downarrow_{\eta}\in \mathcal{BV}(t)$, $s_i\downarrow_{\eta} = s_j\downarrow_{\eta}$ iff $i=j$.
\end{itemize}
We denote the set of superpatterns by $\Lambda_{sp}$.
\end{definition}
Essentially, \textit{superpatterns} are a deep extension of the \textit{pattern} fragment. We consider generalization over the \textit{superpattern} fragment as it is a minimal deep extension of shallow fragments considered in~\cite{DBLP:conf/rta/CernaK19}.
\begin{example}
Examples of superpatterns are 
\begin{itemize}
    \item $\lambda x.\lambda y.Z$, $\lambda x.\lambda y.Z(x,y)$
    \item $\lambda x.\lambda y.Z(R(x,y),K(x),K(P(x)),W,y)$
\end{itemize} 
While the following are not superpatterns
\begin{itemize}
    \item $\lambda x.\lambda y.Z(x,x)$, $\lambda x.\lambda y.Z(\lambda w.w(x),\lambda w.w(y))$
    \item $\lambda x.\lambda y.Z(R(x,y),y,K(x),y,K(P(x)),y,W)$
    \item $\lambda x. \lambda y.Z (y(\lambda u.u))$
\end{itemize}
\end{example}

Let $f\in \Sigma$ be a constant of type $\alpha\rightarrow\alpha$. We use the following anti-unification problem to show that anti-unification over both $\Lambda$ and $\Lambda_{sp}$ is nullary, where $\mathcal{L}\in \{\Lambda,\Lambda_{sp}\}$:
$$\lambda x:\alpha.\lambda y:\alpha. f(x)\triangleq_{\mathcal{L}} \lambda x:\alpha.\lambda y:\alpha. f(y).$$ 
  
\begin{convention}
    The terms $\lambda x:\alpha.\lambda y:\alpha. f(x)$ and  $\lambda x:\alpha.\lambda y:\alpha. f(y)$ will be abbreviated as $s$ and  $t$ respectively.
\end{convention}
 Using the algorithm presented in~\cite{DBLP:journals/jar/BaumgartnerKLV17,DBLP:conf/rta/CernaK19}, the least general pattern generalization of $s\triangleq_{\mathcal{L}}t$ is $\lambda x:\alpha.\lambda y:\alpha. f(Z(x,y))$.  Note that  $(\{Z\mapsto\lambda x, y. x\},\{Z\mapsto\lambda x, y. y\})\in \mathcal{GS}_{gnd}(s,t,\lambda x, y. f(Z(x,y)))$, i.e., the left and right projections are used to recover the original terms. Given that we are considering a more expressive language than the previous work, generalizations exist of $s\triangleq_{\mathcal{L}}t$ more specific than $\lambda x, y. f(Z(x,y))$. For example, 
$\lambda x, y. f(W(W(x,y),W(x,y)))$. We aim to show that any complete set of generalizations will contain a generalization $g$ at least as specific as the pattern generalization and that there exists a $g'$ more specific than $g$. We refer to generalizations $g$ such that $\lambda x.\lambda y.f(Z(x,y))\leq_{\mathcal{L}}g$ as \textit{$\mathcal{L}$-pattern-derived}.

\begin{lemma}
\label{lem:uppermostfree}
Let  $\mathcal{L}\in\{\Lambda,\Lambda_{sp}\}$ and $g= \lambda x.\lambda y.f(r)$ be an  \textit{$\mathcal{L}$-pattern-derived} generalization of $s\triangleq_{\mathcal{L}}t$. Then there exist $Z\in \mathcal{FV}(g)$ of type $\overline{\gamma_m}\rightarrow \alpha$ such that $\mathit{head}(r\downarrow_{\eta})= Z$  where  $m>0$ and for all $1\leq i\leq m$, $\gamma_i\in \mathcal{T}$. 
\end{lemma}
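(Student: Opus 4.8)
The plan is to argue directly from the defining property of a generalization together with the capture-avoiding nature of substitution, via a case analysis on $\head(r\downarrow_{\eta})$. Since $g=\lambda x.\lambda y.f(r)$ is a generalization of $s\triangleq_{\mathcal{L}}t$, there are $\sigma_1,\sigma_2$ with $g\sigma_1=_{\alpha\beta\eta}s$ and $g\sigma_2=_{\alpha\beta\eta}t$ (this is exactly what being a \emph{pattern-derived} generalization supplies, together with the $f$-at-top shape). Because $f$ is a constant sitting at the head of the body of $g$, and neither substitution nor $\beta$-reduction can rewrite a constant head, I would peel off the two binders and the leading $f$ from both normalized sides to obtain $(r\sigma_1)\downarrow = x$ and $(r\sigma_2)\downarrow = y$. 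First I would record that $r$ has base type $\alpha$, so in $\eta$-long $\beta$-normal form it carries no leading abstraction and $\head(r\downarrow_{\eta})$ is a constant, a bound variable, or a free variable.

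Next I would eliminate the non-free-variable heads. If $\head(r\downarrow_{\eta})$ were a bound variable it could only be $x$ or $y$, and as both have base type $\alpha$ this forces $r\downarrow_{\eta}=x$ or $r\downarrow_{\eta}=y$ with no arguments; then $g$ equals the closed term $s$ (resp. $t$), so applying $\sigma_2$ (resp. $\sigma_1$) leaves it fixed and cannot yield the other term, contradicting that $g$ generalizes both. If $\head(r\downarrow_{\eta})$ were a constant $c$, then since $c$ is not an abstraction, the application $c(\cdots)$ is not a redex and is untouched by $\sigma_1$ at the head, so $(r\sigma_1)\downarrow$ still has head $c$, whereas $x$ has head $x\neq c$ — again a contradiction. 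This leaves $\head(r\downarrow_{\eta})=Z$ for some $Z\in\mathcal{FV}(g)$ of type $\overline{\gamma_m}\rightarrow\alpha$; the condition $\gamma_i\in\mathcal{T}$ is then automatic.

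Finally, the crux is $m>0$. I would assume for contradiction that $m=0$, so $Z$ has base type $\alpha$, $r\downarrow_{\eta}=Z$, and $g=\lambda x.\lambda y.f(Z)$. Then $(r\sigma_1)\downarrow = (Z\sigma_1)\downarrow$ would have to be $\alpha\beta\eta$-equal to $x$. But capture-avoidance forbids exactly this: $Z\sigma_1$ is a legitimate $\lambda$-term with no unabstracted bound variables, so when substituted under the binders $\lambda x.\lambda y$ it introduces no occurrence that those binders capture; hence $(r\sigma_1)\downarrow$ cannot mention the bound $x$ and cannot equal $x$. This forces $Z$ to be applied to at least one argument, i.e. $m>0$. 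I expect this last step — pinning down the capture-avoidance argument precisely, rather than the routine head case analysis — to be the main obstacle, since it is the point where the deliberate separation of $\mathcal{FV}$ and $\mathcal{BV}$ and the capture-avoiding semantics of substitution (as stressed in the abstract) do the essential work.
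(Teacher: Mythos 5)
Your proof is correct and takes essentially the same route as the paper's: a case analysis on $\head(r\downarrow_{\eta})$ that eliminates constant, bound-variable, and abstraction heads, and then rules out $m=0$ by appealing to capture-avoiding substitution. Your versions of the individual cases (using the base type of $r$ to exclude a leading abstraction, and closedness of $g$ in the bound-variable case) are just slightly more explicit renderings of the paper's own arguments.
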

\begin{proof}
Let us consider the possible cases for $\mathit{head}(r\downarrow_{\eta})$: 
\begin{itemize}
    \item If $\mathit{head}(r\downarrow_{\eta}) = c \in \Sigma$ of type $\overline{\beta_n}\rightarrow \alpha$ then $g= \lambda x.\lambda y.f(c(r_1,\cdots r_n))$ where for all $1\leq i\leq n$, $r_i$ is of type $\beta_i$. Note that there do not exist substitutions $\sigma_1$ and $\sigma_2$ such that $g\sigma_1 = s$ and $g\sigma_2= t$ because neither term contains an application of $f$ to a constant $c$. Thus, $g$ is not a generalization of $s\triangleq_{\mathcal{L}}t$.
    \item  If $\mathit{head}(r\downarrow_{\eta})\in \mathcal{BV}(g)$, then it must be either $x$ or $y$ as no other bound variables are abstracted by a $\lambda$ above the position of $r$ in $g$. However, $s$ and $t$ disagree concerning which of the two bound variables $f$ takes as an argument. Thus, $g$ cannot be a generalization of $s\triangleq_{\mathcal{L}}t$. 
    \item If $\mathit{head}(r\downarrow_{\eta}) = \lambda z.$ where $z$ is of type $\gamma$,  then $r$ is at least of type $\gamma\rightarrow \alpha$. We assume our generalizations are well-typed and in $\eta$-long $\beta$-normal form, thus implying a type mismatch between $f$ and $r$. Thus, $g$ cannot be a generalization of $s\triangleq_{\mathcal{L}}t$ as we assume all terms are in $\eta$-long $\beta$-normal form and thus $r$ would be of the incorrect type to be an argument of $f$. 
\end{itemize}
The only possibility not yet considered is $\mathit{head}(r\downarrow_{\eta})=Z\in \mathcal{FV}(g)$ of type $\overline{\gamma_m}\rightarrow \alpha$ where  $m>0$. Note that if $m=0$, then $Z$ cannot be instantiated by the bound variables $x$ and $y$ (remember, capture avoiding substitutions) and thus would not be a generalization of $s\triangleq_{\mathcal{L}}t$.
\end{proof}
\noindent Using Lemma~\ref{lem:uppermostfree} we can deduce that $\mathcal{L}$-pattern-derived generalizations of  $s\triangleq_{\mathcal{L}}t$ must be of the form $\lambda x, y. f(Z(\overline{s_m}))$, where $Z\in \mathcal{FV}$ of type 
$\overline{\gamma_m}\rightarrow \alpha$,  $m>0$,  and for all $1\leq i\leq m$, $s_i\in \mathcal{L}$ and of type $\gamma_i\in\mathcal{T}$.

\begin{lemma}
\label{lem:canonForm}
Let $\mathcal{L}\in\{\Lambda,\Lambda_{sp}\}$ and $C\subseteq \mathcal{G}_{\mathcal{L}}(s,t)$ be complete. Then there exists $g\in C$ such that $g$ is  $\mathcal{L}$-pattern-derived. 
\end{lemma}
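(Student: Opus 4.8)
The plan is to obtain the lemma directly from the completeness condition, using the pattern generalization itself as the test case. Recall that $g$ is $\mathcal{L}$-pattern-derived by definition precisely when $\lambda x.\lambda y.f(Z(x,y)) \leq_{\mathcal{L}} g$; writing $P = \lambda x.\lambda y.f(Z(x,y))$, the lemma thus asserts that some member of $C$ dominates $P$ in the generalization quasi-ordering. Since completeness (condition~2) guarantees, for every $g' \in \mathcal{G}_{\mathcal{L}}(s,t)$, the existence of a $g \in C$ with $g' \leq_{\mathcal{L}} g$, it suffices to show that $P$ is itself a member of $\mathcal{G}_{\mathcal{L}}(s,t)$ and then instantiate condition~2 at $g' := P$.

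First I would verify $P \in \mathcal{G}_{\mathcal{L}}(s,t)$ for both $\mathcal{L} \in \{\Lambda, \Lambda_{sp}\}$. The generalization witnesses are the left and right projections $\sigma_1 = \{Z \mapsto \lambda x,y.x\}$ and $\sigma_2 = \{Z \mapsto \lambda x,y.y\}$; a single $\beta$-reduction gives $P\sigma_1 =_{\alpha\beta\eta} \lambda x.\lambda y.f(x) = s$ and $P\sigma_2 =_{\alpha\beta\eta} \lambda x.\lambda y.f(y) = t$, as already recorded above. For the case $\mathcal{L} = \Lambda_{sp}$ I must additionally confirm that $P$ is a superpattern in the sense of Definition~\ref{def:superpattern}: its unique free-variable occurrence is $Z(x,y)$, whose arguments $x$ and $y$ lie in $\mathcal{BV}(P)$ and are distinct, so both clauses of the definition hold. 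Hence $P \in \Lambda_{sp} \subseteq \Lambda$, and $P$ is a valid $\mathcal{L}$-generalization in either setting.

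With $P \in \mathcal{G}_{\mathcal{L}}(s,t)$ established, applying the completeness condition to $g' := P$ produces a $g \in C$ with $P \leq_{\mathcal{L}} g$, which is exactly the assertion that $g$ is $\mathcal{L}$-pattern-derived. This $g$ is the required element, completing the argument.

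I do not expect a genuine obstacle here: the lemma is essentially the statement that the pattern generalization cannot be ``missed'' by any complete set, and the argument is a one-line appeal to completeness once $P$ is known to be a generalization. The only place demanding minor care is the superpattern check for $\Lambda_{sp}$, where one must use that the two arguments of $Z$ are distinct bound variables. This secures the base of the later nullarity argument, whose real content (constructing, for each pattern-derived $g$, a strictly more specific generalization) is carried out in the subsequent lemmas rather than here.
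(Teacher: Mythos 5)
Your proposal is correct and takes essentially the same route as the paper: the paper's proof is a one-line appeal to completeness condition~2 instantiated at the pattern generalization $\lambda x.\lambda y.f(Z(x,y))$, which is exactly your argument. You additionally verify the premises the paper leaves implicit (that the pattern generalization lies in $\mathcal{G}_{\mathcal{L}}(s,t)$ via the projection substitutions, and that it is a superpattern so the $\Lambda_{sp}$ case goes through), which is a harmless and indeed welcome elaboration rather than a different approach.
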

\begin{proof}

By the definition of a complete set of generalizations, $C$ must contain a generalization that is at least as specific as  $\lambda x:\alpha.\lambda y:\alpha. f(Z(x,y))$. Any generalization that is at least as specific as  $\lambda x:\alpha.\lambda y:\alpha. f(Z(x,y))$ is referred to as  $\mathcal{L}$-pattern-derived. Thus, $C$ must contain such a generalization. 

\end{proof}

Restricting ourselves to $\mathcal{L}$-pattern-derived generalizations greatly simplifies the structure of the generalizations we must consider. Remember, we aim to show that any ``minimal'' complete set of generalizations contains comparable generalizations. Thus, showing this property for $\mathcal{L}$-pattern-derived generalizations is enough. However, $\mathcal{L}$-pattern-derived generalizations may contain free variables that do not contribute to generalizing the structure of $s$ and $t$. Such variables may be removed through substitution. The resulting generalizations are called $\mathcal{L}$-tight.

\begin{definition}
\label{def:patternDerived}
Let $g$ be an $\mathcal{L}$-pattern-derived generalization of $s\triangleq_{\mathcal{L}} t$. Then $g$ is \textit{$\mathcal{L}$-tight} if for all $W\in  \mathcal{FV}(g)$:
\begin{itemize}
    \item[1)] $g\{W\mapsto \lambda \overline{b_k}. b_i\}\not \in  \mathcal{G}_{\mathcal{L}}(s,t)$, if $W$ has type $\overline{\gamma_k}\rightarrow \gamma_i$ and for $1\leq i \leq k$ and $\gamma_i\in \mathcal{T}$, and
    \item[2)] For any $(\sigma_1,\sigma_2)\in\mathcal{GS}_{gnd}(s,t,g)$,  $g\{W\mapsto t_1\}, g\{W\mapsto t_2\}\not \in  \mathcal{G}_{\mathcal{L}}(s,t)$  where $t_1 = W\sigma_1$, $t_2 = W\sigma_2$.
\end{itemize}
\end{definition}
Essentially, $\mathcal{L}$-tight generalizations are $\mathcal{L}$-pattern-derived generalizations that cannot be made more specific by substituting free variables by \textit{projections} or by the terms in the range substitution pairs contained in $\mathcal{GS}_{gnd}(s,t,g)$. Performing either one of these actions will result in a term that does not generalize $s\triangleq_{\mathcal{L}}t$. 
\begin{example}
The following generalizations of $s\triangleq_{\mathcal{L}} t$ are \textit{$\mathcal{L}$-tight}: 
\begin{itemize}
    \item $\lambda x.\lambda y.f(Z(x,y))$, $\lambda x.\lambda y.f(Z(Z(x,y),Z(x,y)))$
    \item $\lambda x.\lambda y.f(Z(W(x,Z(f(a),a)),W(Z(a,f(a)),y)))$
\end{itemize}
For example, consider the following substitutions: 
\begin{align*}
    \lambda x.\lambda y.f(Z(W(x,Z(f(a),a)),W(Z(a,f(a)),y)))\{Z\mapsto \lambda w_1\lambda w_2. w_1\} &= \lambda x.\lambda y.f(W(x,f(a)))\\
    \lambda x.\lambda y.f(Z(W(x,Z(f(a),a)),W(Z(a,f(a)),y)))\{Z\mapsto \lambda w_1\lambda w_2. w_2\} &= \lambda x.\lambda y.f(y)\\
    \lambda x.\lambda y.f(Z(W(x,Z(f(a),a)),W(Z(a,f(a)),y)))\{W\mapsto \lambda w_1\lambda w_2. w_1\} &= \lambda x.\lambda y.f(Z(x,Z(a,f(a))))\\
    \lambda x.\lambda y.f(Z(W(x,Z(f(a),a)),W(Z(a,f(a)),y)))\{W\mapsto \lambda w_1\lambda w_2. w_2\} &= \lambda x.\lambda y.f(Z(Z(f(a),a),y))
\end{align*}
Notice that none of the righthand side terms is a generalization of $s\triangleq_{\mathcal{L}} t$. On the contrary,  $\lambda x.\lambda y.f(Z(R(x),R(y)))$ and  $\lambda x.\lambda y.f(Z(Z(x,y),Y(x,y)))$ are not \textit{$\mathcal{L}$-tight} since we can substitute $\lambda b_1. b_1$ for $R$ and $\lambda \overline{b_2}. b_2$ for $Y$ respectively.
\end{example}
To simplify the transformation from $\mathcal{L}$-pattern-derived generalizations into \textit{$\mathcal{L}$-tight} generalizations, we only consider ground generalizing substitutions, i.e. substitution pairs in $\mathcal{GS}_{gnd}(s,t,g)$; this is to avoid the introduction of fresh free variables that violate the $\mathcal{L}$-tight conditions while removing violating free variables. Our assumption that $\Sigma$ contains at least one constant of each type in $\mathcal{T}$ allows us to restrict ourselves to ground substitutions. For any substitution whose range contains a free variable $X$ of type $\tau$, we can find a corresponding substitution where $X$ is replaced by a constant $c\in \Sigma$ of type $\tau$. From now on, we assume that the terms comprising the range of generalizing substitutions do not contain free variables. 
\begin{lemma}
\label{lem:minimal}
Let $g$ be an $\mathcal{L}$-pattern-derived generalization of $s\triangleq_{\mathcal{L}} t$. Then there exists an $\mathcal{L}$-tight generalization $g'$ of $s\triangleq_{\mathcal{L}} t$ such that $g\leq_{\mathcal{L}} g'$.
\end{lemma}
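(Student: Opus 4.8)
The plan is to refine $g$ step by step, at each step eliminating a free variable that witnesses a violation of the $\mathcal{L}$-tight conditions, while controlling termination with the measure $|\mathcal{FV}(g)|$, the number of distinct free variables occurring in $g$. Concretely, I would argue by strong induction on $|\mathcal{FV}(g)|$. If $g$ is already $\mathcal{L}$-tight, then $g' = g$ satisfies the claim since $\leq_{\mathcal{L}}$ is reflexive. Otherwise $g$ fails one of the two conditions of Definition~\ref{def:patternDerived}, so there is some $W \in \mathcal{FV}(g)$ and a substitution $\theta$ — either a projection $\theta = \{W \mapsto \lambda \overline{b_k}. b_i\}$ (condition~1 fails) or $\theta = \{W \mapsto t_j\}$ with $t_j = W\sigma_j$ for some $(\sigma_1,\sigma_2) \in \mathcal{GS}_{gnd}(s,t,g)$ (condition~2 fails) — such that $g_1 := g\theta \in \mathcal{G}_{\mathcal{L}}(s,t)$.

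I would then record three facts about this refinement step. First, $g \leq_{\mathcal{L}} g_1$, immediately from $g_1 = g\theta$. Second, $g_1$ is again $\mathcal{L}$-pattern-derived, because $\lambda x.\lambda y.f(Z(x,y)) \leq_{\mathcal{L}} g \leq_{\mathcal{L}} g_1$ by transitivity of $\leq_{\mathcal{L}}$. Third, $|\mathcal{FV}(g_1)| < |\mathcal{FV}(g)|$. I would also note the one remaining bookkeeping point — that $g_1$ stays inside $\mathcal{L}$ — is free, since membership $g\theta \in \mathcal{G}_{\mathcal{L}}(s,t)$ already entails that $g\theta$ is an $\mathcal{L}$-term (in particular a superpattern when $\mathcal{L} = \Lambda_{sp}$).

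The third fact is the heart of the argument and the step I expect to require the most care. In both admissible forms of $\theta$, the variable $W$ is replaced by a term containing no free variables: the body of a projection is a bound variable $b_i$, and $t_j$ is free-variable-free because $(\sigma_1,\sigma_2)$ is a ground generalizing pair. This is exactly where the standing assumption that $\Sigma$ contains a constant of every type in $\mathcal{T}$ is used: it guarantees both that $\mathcal{GS}_{gnd}(s,t,g) \neq \emptyset$ and that the range terms of its members are ground. Hence after normalising $g\theta$ to $\eta$-long $\beta$-normal form, every occurrence of $W$ has been erased and no new free variable has been introduced, so the set of distinct free variables strictly shrinks — it loses $W$ (and possibly further variables that are projected away), but gains none.

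With these three facts in hand, the induction closes: by the second and third facts the inductive hypothesis applies to $g_1$, yielding an $\mathcal{L}$-tight $g'$ with $g_1 \leq_{\mathcal{L}} g'$, and then $g \leq_{\mathcal{L}} g_1 \leq_{\mathcal{L}} g'$ delivers the desired $g'$. Well-foundedness of the recursion is clear because $|\mathcal{FV}(\cdot)|$ is a non-negative integer; moreover it cannot descend below $1$, since Lemma~\ref{lem:uppermostfree} forces any $\mathcal{L}$-pattern-derived generalization to contain the head free variable $Z$ under $f$, while a closed generalization would force $g\sigma_1 = g = g\sigma_2$, hence $s = t$, contradicting $s \neq t$. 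Thus the only anticipated obstacle is the free-variable-count bookkeeping in the third fact, and it is handled precisely by the constant-of-every-type assumption together with the bound-variable shape of projection bodies.
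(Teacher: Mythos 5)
Your proposal is correct and follows essentially the same route as the paper: both eliminate tightness-violating variables by substituting projections (condition~1) or ground range terms from a pair in $\mathcal{GS}_{gnd}(s,t,g)$ (condition~2), and both terminate via the strict decrease of $|\mathcal{FV}(\cdot)|$, which relies—exactly as you note—on the groundness of the substitutions secured by the constant-of-every-type assumption on $\Sigma$. Your packaging as a strong induction that picks an arbitrary violation at each step is in fact slightly tidier than the paper's two-phase wording, since it silently handles any interleaving between the two kinds of violations, but it is the same argument.
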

\begin{proof}
We proceed by showing how to construct a generalization that does not violate either of the properties listed in  Definition~\ref{def:patternDerived}.
\begin{flushleft}
    \underline{Definition~\ref{def:patternDerived}, case (1)}
\end{flushleft}
Let $W\in \mathcal{FV}(g)$ with type $\overline{\gamma_m}\rightarrow \beta$ be such that for any $1\leq i\leq m$, $g'= g\{W\mapsto \lambda \overline{b_m}. b_i\}\in \mathcal{G}_{\mathcal{L}}(s,t)$. Note that $g'$ has fewer free variables than $g$. Given that $\mathcal{FV}(g)$ is of finite size, after a finite number of applications of such substitutions, the resulting generalizations will no longer violate case (1) of Definition~\ref{def:patternDerived}. 
\begin{flushleft}
    \underline{Definition~\ref{def:patternDerived}, case (2)}
\end{flushleft}
Let $(\sigma_1,\sigma_2)\in\mathcal{GS}_{gnd}(s,t,g)$ and $W\in \mathcal{FV}(g)$ such that,  w.l.o.g,  $g'= g\{W\mapsto W\sigma_1\}\in \mathcal{G}_{\mathcal{L}}(s,t)$. Note that $\sigma_1$ is ground, thus $\mathcal{FV}(g')\subset \mathcal{FV}(g)$. Furthermore,  $\mathcal{FV}(g)$ is of finite size; thus, after a finite number of applications of such substitutions, the resulting generalizations will no longer violate case (2) of Definition~\ref{def:patternDerived}.
\end{proof}
\begin{example}
For $\lambda x.\lambda y.f(Z(R(x),R(y)))$,  both generalizing substitutions map $R$ to the identity function $\lambda w.w$. Note that
$\lambda x.\lambda y.f(Z(R(x),R(y)))\{R\mapsto \lambda w.w\}=\lambda x.\lambda y.f(Z(x,y))$, an \textit{$\mathcal{L}$-tight} generalization of $s\triangleq_{\mathcal{L}} t$.
\end{example}
\begin{observation}
\label{cor:remainpd}
If $g$ is an  $\mathcal{L}$-pattern-derived generalization of $s\triangleq_{\mathcal{L}} t$ and $g'$ is an $\mathcal{L}$-tight generalization of $s\triangleq_{\mathcal{L}} t$ such that $g\leq_{\mathcal{L}} g'$, then $g'$ is $\mathcal{L}$-pattern-derived.
\end{observation}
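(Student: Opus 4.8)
The plan is to show that the tightening construction from Lemma~\ref{lem:minimal} preserves the $\mathcal{L}$-pattern-derived property, i.e., that the generalization $g'$ still satisfies $\lambda x.\lambda y.f(Z(x,y)) \leq_{\mathcal{L}} g'$. Since $g \leq_{\mathcal{L}} g'$ is given and $g$ is $\mathcal{L}$-pattern-derived, the natural approach is to invoke transitivity of $\leq_{\mathcal{L}}$: if I can establish that $\leq_{\mathcal{L}}$ is transitive, then from $\lambda x.\lambda y.f(Z(x,y)) \leq_{\mathcal{L}} g$ and $g \leq_{\mathcal{L}} g'$ I immediately conclude $\lambda x.\lambda y.f(Z(x,y)) \leq_{\mathcal{L}} g'$, which is precisely the definition of $g'$ being $\mathcal{L}$-pattern-derived.

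First I would verify transitivity of $\leq_{\mathcal{L}}$, since the whole argument hinges on it. By definition, $\lambda x.\lambda y.f(Z(x,y)) \leq_{\mathcal{L}} g$ means there is a substitution $\theta$ with $(\lambda x.\lambda y.f(Z(x,y)))\theta =_{\alpha\beta\eta} g$, and $g \leq_{\mathcal{L}} g'$ means there is $\delta$ with $g\delta =_{\alpha\beta\eta} g'$. Composing, $(\lambda x.\lambda y.f(Z(x,y)))\theta\delta =_{\alpha\beta\eta} g\delta =_{\alpha\beta\eta} g'$, so $\theta\delta$ is the required witness. The only subtlety is that substitution application must respect $=_{\alpha\beta\eta}$ — that is, if $u =_{\alpha\beta\eta} v$ then $u\delta =_{\alpha\beta\eta} v\delta$ — but this is a standard congruence property of the simply-typed $\lambda$-calculus with capture-avoiding substitution, and it is already implicitly relied upon throughout the preceding lemmas.

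The second thing I would confirm is that the witnessing composite substitution keeps us inside the relevant fragment when $\mathcal{L} = \Lambda_{sp}$. Here the statement only requires $g'$ to be $\mathcal{L}$-pattern-derived, and $g'$ is already assumed to be an $\mathcal{L}$-generalization (hence an $\mathcal{L}$-term, in particular a superpattern when $\mathcal{L} = \Lambda_{sp}$); so no additional fragment-membership check is needed for $g'$ itself. The pattern generalization $\lambda x.\lambda y.f(Z(x,y))$ is trivially a superpattern since $Z$'s arguments are the distinct bound variables $x,y$, so the relation $\leq_{\mathcal{L}}$ is well-defined on the pair in question. Thus the observation reduces entirely to transitivity applied to two inequalities we already possess.

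The main obstacle, such as it is, is purely one of bookkeeping rather than mathematical depth: I must ensure the composition $\theta\delta$ is genuinely a well-typed, capture-avoiding substitution and that the equalities chain correctly under $=_{\alpha\beta\eta}$. Because both $\theta$ and $\delta$ are type-preserving and capture-avoiding by assumption, their composition is as well, and the congruence of $=_{\alpha\beta\eta}$ under substitution closes the gap. I therefore expect the proof to be a short two-line transitivity argument, with the only care needed being an explicit appeal to the fact that $g'$ inherits superpattern-membership from being an $\mathcal{L}$-generalization in the $\Lambda_{sp}$ case.
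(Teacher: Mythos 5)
Your proof is correct and matches the paper's (implicit) reasoning: the paper states this as an unproved Observation precisely because it is immediate from the definition of $\mathcal{L}$-pattern-derived (namely $\lambda x.\lambda y.f(Z(x,y)) \leq_{\mathcal{L}} g$) combined with transitivity of the quasi-ordering $\leq_{\mathcal{L}}$, which is exactly the composition-of-substitutions argument you give. Your extra bookkeeping (congruence of $=_{\alpha\beta\eta}$ under substitution, superpattern membership of $g'$ in the $\Lambda_{sp}$ case) is sound but goes no further than what the paper already takes for granted in calling $\leq_{\mathcal{L}}$ a quasi-ordering.
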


$\mathcal{L}$-tight generalizations represent the most specific generalization we can construct by removing variables that do not directly generalize the structure of $s$ and $t$. While these generalizations remain $\mathcal{L}$-pattern-derived, they structurally differ from the pattern generalization of $s$ and $t$, that is $\lambda x.\lambda y.f(Z(x,y))$. For the pattern generalization, we derive $s$ and $t$ by substituting the left and right projections into the argument of $f$; this may not be the case for   $\mathcal{L}$-tight generalizations; that is, more complex substitutions may be required. We introduce \textit{pseudo-patterns generalizations}, that is, generalizations more specific than   $\mathcal{L}$-tight generalizations which emulate the structure of the pattern generalization, i.e., the argument of $f$ is a variable which generalizes $s$ and $t$ through the left and right projections. This transformation is performed separately for $\Lambda$ and $\Lambda_{sp}$.  
\subsection{Pseudo-patterns over Simply-typed Lambda Terms}
\label{lam}
First, we consider the transformation for $\Lambda$ and then generalize it to $\Lambda_{sp}$.
\begin{definition}
\label{def:gpseudopattern}
Let $g=\lambda x.\lambda y. f(Z(\overline{s_m}))$ be an $\Lambda$-tight generalization of $s\triangleq_{\Lambda} t$ where $Z$ has type $\overline{\delta_m}\rightarrow \alpha$  for $1\leq i\leq m$, and $s_i$ has type $\delta_i$. Furthermore, let  $(\sigma_1,\sigma_2)\in \mathcal{GS}_{gnd}(s,t,g)$ be substitutions such that $Z\sigma_1=r_1$ and $Z\sigma_2=r_2$, $r_1$ and $r_2$ are of type $\overline{\delta_m}\rightarrow \alpha$, and $Y$ is a variable such that $Y\not\in \mathcal{FV}(g)$ and has type $\alpha\rightarrow\alpha\rightarrow\alpha$. Then the \textit{$(g,\Lambda)$-pseudo-pattern}, denoted $\mathit{G}_{\Lambda}(g,Z,Y,\sigma_1,\sigma_2)$, is
$g\{Z\mapsto \lambda \overline{b_m}.Y(r_1(\overline{b_m}),r_2(\overline{b_m})))\} = \lambda x.\lambda y. f(Y(r_1(\overline{q_m}),r_2(\overline{q_m}))))$, where for all $ 1\leq i\leq m$, $q_i = s_i\{Z\mapsto \lambda \overline{b_m}.Y(r_1(\overline{b_m}),r_2(\overline{b_m})))\} $.
\end{definition}
To simplify the proof outlined in Section~\ref{sec:nullary} we will assume that the  \textit{$(g,\Lambda)$-pseudo-pattern}  $\mathit{G}_{\Lambda}(g,Z,Y,\sigma_1,\sigma_2)$ is a term of the form $\lambda x, y. f(Y(r_1,r_2))$ where $r_1$ and $r_2$ are of type $\alpha$ and $Y$ of type $\alpha\rightarrow\alpha\rightarrow\alpha$.
\begin{example}
Consider the $\Lambda$-tight generalization $ g= \lambda x.\lambda y. f(Z(\lambda w.x,\lambda w.y))$ where $w$ has type $\alpha\rightarrow\alpha$ and $(\{ Z\mapsto \lambda w_1.\lambda w_2. w_1f\},\{ Z\mapsto \lambda w_1.\lambda w_2. w_2f\})\in \mathcal{GS}_{gnd}(s,t,g)$ where $f$ is the constant with type $\alpha\rightarrow\alpha$. Then the $(g,\Lambda)$-pseudo-pattern is $ \lambda x.\lambda y. f(Y(x,y))$.
\end{example}
Note, for any pseudo-pattern $\mathit{G}_{\Lambda}(g,Z,Y,\sigma_1,\sigma_2)$,  $(\{Y\mapsto\lambda x, y. x\}, \{Y\mapsto\lambda x, y. y\})\in \mathcal{GS}_{gnd}(s,t,g)$, thus mimicking the pattern generalization  $\lambda x, y.f(Z(x,y))$. We now show that $\mathit{G}_{\Lambda}(g,Z,Y,\sigma_1,\sigma_2)$ is a generalization of $s\triangleq_{\Lambda} t$.
\begin{lemma}
\label{lem:pseudopattern}
Let $g = \lambda x, y. f(Z(\overline{s_m}))$ be a $\Lambda$-tight generalization of $s\triangleq_{\Lambda} t$, $(\sigma_1,\sigma_2)\in\mathcal{GS}_{gnd}(s,t,g)$, and $g' =  \lambda x, y. f(Y(r_1(\overline{q_m}),r_2(\overline{q_m}))))$, the $\Lambda$-pseudo pattern $\mathit{G}_{\Lambda}(g,Z,Y,\sigma_1,\sigma_2)$. Then $g'$ is a generalization of  $s\triangleq_{\Lambda} t$.

\end{lemma}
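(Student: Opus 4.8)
The plan is to exhibit an explicit generalizing pair $(\theta_1,\theta_2)$ for $g'$ and to verify it via substitution composition, reducing each $g'\theta_i$ to the corresponding $g\sigma_i$, which we already know equals $s$ (resp. $t$). The starting point is the observation that, by Definition~\ref{def:gpseudopattern}, $g' = g\rho$ where $\rho = \{Z\mapsto \lambda \overline{b_m}.Y(r_1(\overline{b_m}),r_2(\overline{b_m}))\}$, with $r_1 = Z\sigma_1$, $r_2 = Z\sigma_2$, and $Y\notin\mathcal{FV}(g)$ fresh of type $\alpha\rightarrow\alpha\rightarrow\alpha$. Since $(\sigma_1,\sigma_2)\in\mathcal{GS}_{gnd}(s,t,g)$ and we have assumed the ranges of generalizing substitutions contain no free variables, both $r_1$ and $r_2$ are ground; in particular neither contains an occurrence of $Y$, a fact that will be used to push $\theta_i$ through $\rho$ without disturbing the $r_i$.

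Next I would define $\theta_1 = \{Y\mapsto \lambda a,b.a\}\cup\{V\mapsto V\sigma_1 \mid V\in\mathcal{FV}(g),\ V\neq Z\}$, and symmetrically $\theta_2$ with $Y\mapsto \lambda a,b.b$ and $\sigma_2$ in place of $\sigma_1$, so that the left/right projections on $Y$ play the role the projections on $Z$ play for the pattern generalization. The computation is then $g'\theta_1 = g\rho\theta_1 = g(\rho\theta_1)$, and it suffices to check that the composed substitution $\rho\theta_1$ agrees with $\sigma_1$ on every free variable of $g$. For $V\in\mathcal{FV}(g)$ with $V\neq Z$ this is immediate, since $\rho$ fixes $V$ and $Y$ is fresh, so $(V\rho)\theta_1 = V\theta_1 = V\sigma_1$. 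The only genuine calculation is the case of $Z$: using that $r_1$ contains no free $Y$ (so $\theta_1$ leaves it untouched), we get $(Z\rho)\theta_1 = \lambda\overline{b_m}.(\lambda a,b.a)(r_1(\overline{b_m}),r_2(\overline{b_m})) =_{\beta} \lambda\overline{b_m}.r_1(\overline{b_m}) =_{\eta} r_1 = Z\sigma_1$. Hence $g'\theta_1 = g\sigma_1 = s$, and an identical argument gives $g'\theta_2 = g\sigma_2 = t$, so $(\theta_1,\theta_2)$ witnesses that $g'$ is a generalization of $s\triangleq_{\Lambda} t$.

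The main obstacle is bookkeeping rather than a deep idea. I must verify carefully that (i) the freshness of $Y$ and the groundness of $r_1,r_2$ genuinely prevent $\theta_i$ from altering the subterms $r_1,r_2$ inside $\rho$ and that no variable capture arises when composing the two substitutions, and (ii) the step $\lambda\overline{b_m}.r_1(\overline{b_m}) =_{\eta} r_1$ is legitimate, which holds because $r_1$ has type $\overline{\delta_m}\rightarrow\alpha$ and equality is taken up to $\alpha\beta\eta$. I would stress that the composition argument is what makes the verification uniform over all $\Lambda$-tight $g$: it correctly handles the case where the arguments $\overline{s_m}$ themselves nest further free variables (including nested occurrences of $Z$, which $\rho$ renames to $Y$), even though substituting the projection for $Y$ alone does not recover $s$ and $t$ in such cases. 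Under the simplifying assumption that $g'$ has the form $\lambda x,y.f(Y(r_1,r_2))$ with $r_1,r_2:\alpha$, the $\eta$-step collapses to a single $\beta$-reduction of the projection, but I would still present the general composition to keep the proof self-contained.
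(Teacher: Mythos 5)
Your proof is correct and essentially the paper's own argument: your $\theta_1 = \{Y\mapsto \lambda a,b.\,a\}\cup\{V\mapsto V\sigma_1 \mid V\in\mathcal{FV}(g),\ V\neq Z\}$ is exactly the witnessing pair $\sigma_1'\cup\{Y\mapsto \lambda w_1.\lambda w_2.\,w_1\}$ the paper exhibits (with $\sigma_1=\sigma_1'\cup\{Z\mapsto r_1\}$), and your key step $(Z\rho)\theta_1 =_{\beta\eta} r_1$ is the same projection-collapse the paper performs when it $\beta$-reduces $g'\{Y\mapsto \lambda w_1.\lambda w_2.\,w_1\}$ to $\lambda x.\lambda y.\,f(r_1(\overline{p_m}))$ and then applies $\sigma_1'$. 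The only difference is presentational: you phrase the verification as a substitution-composition check ($g'\theta_1 = g(\rho\theta_1)$ with $\rho\theta_1$ agreeing with $\sigma_1$ on $\mathcal{FV}(g)$), whereas the paper narrates the same computation as a ``delayed'' application of $\sigma_1$, so no gap remains.
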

\begin{proof}
Let $\sigma_1 =  \sigma_1'\cup \{Z\mapsto r_1\}$ and $\sigma_2 =  \sigma_2'\cup \{Z\mapsto r_2\}$ such that $Z\not \in \dom(\sigma_1')\cup\dom(\sigma_2')$. Note that $g'$ is a partial application of the substitutions $\sigma_1$ and $\sigma_2$ which has been delayed. If we project to the first position of $Y$ and apply $\sigma_1'$, the resulting term is $s$. The computation is as follows:  $g'\{Y\mapsto \lambda w_1\lambda w_2. w_1\} = \lambda x.\lambda y. f(Y(r_1(\overline{q_m}),r_2(\overline{q_m}))))\{Y\mapsto \lambda w_1\lambda w_2. w_1\}$ $=_{\beta}  \lambda x.\lambda y. f(r_1(\overline{p_m}))$ where for all $1\leq i\leq m$, $p_i = q_i\{Y\mapsto \lambda w_1\lambda w_2. w_1\}$. From $\lambda x.\lambda y. f(r_1(\overline{p_m}))$ we derive $s$ by applying $\sigma_1'$. If we start from $g$, we derive the same computation by choosing the $\sigma_1$, which contains the left projection into $Y$, i.e., $\{Z\mapsto r_1\}$.

Essentially, chosing whether to apply $\{Z\mapsto r_1\}$ or $\{Z\mapsto r_2\}$ amounts to substituting $Y$ by the left or right projection. Thus, $\lambda x.\lambda y. f(Z(\overline{s_m}))\{Z\mapsto r_1\}=_{\beta}  \lambda x.\lambda y. f(r_1(\overline{p_m}))$ (applying left projection). In the latter case, applying $\sigma_1'$ results in $s$, as expected. A similar construction can be performed for the right projection. Thus, $g'$ is a generalization of $s\triangleq_{\Lambda} t$ such that  $ (\sigma_1'\cup \{Y\mapsto \lambda w_1\lambda w_2. w_1\},\sigma_2'\cup \{Y\mapsto \lambda w_1\lambda w_2. w_2\})\in \mathcal{GS}_{gnd}(s,t,g
')$. 
\end{proof}

In the next section we show how to \textit{lift} $(g,\Lambda)$-pseudo-patterns  to  $(g,\Lambda_{sp})$-pseudo-patterns. Importantly, $(g,\Lambda)$-pseudo-pattern always have the form $\lambda x.\lambda y. f(Y(r_1,r_2))$ as mentioned above. Similar holds for $(g,\Lambda_{sp})$-pseudo-patterns.

\subsection{Pseudo-patterns over the Super-pattern Fragment}
\label{pat}
The construction presented in Subsection~\ref{lam} may result in a term containing subterms that violate the definition of $\Lambda_{sp}$. This section shows how to lift the generalizations constructed in Subsection~\ref{lam} to $\Lambda_{sp}$. 
\begin{definition}
\label{def:sigmamax}
Let $\lambda\overline{w_m}.r\in\Lambda$. We refer to $q\in \mathit{pos}(r)$ as \textit{$(\Sigma,\overline{w_m})$-representative} if $\mathit{head}(r\vert_{q}\downarrow_{\eta})$ is in $\Sigma$ or an abstraction $\lambda w.$ We refer to  $q\in \mathit{pos}(r)$ as \textit{$(\Sigma,\overline{w_m})$-maximal} if for all $p\in \mathit{pos}(r)$, such that $p<q$, $p$ is not \textit{$(\Sigma,\overline{w_m})$-representative}. The set of \textit{$(\Sigma,\overline{w_m})$-maximal} positions of $r$ is denoted by $m(r,\Sigma,\overline{w_m})$.

\end{definition}
\begin{example}
\label{ex:maximal}
Consider the $\Lambda$-tight generalization $ r= \lambda x.\lambda y. f(Z(\lambda w.x,x,\lambda w_1\lambda w_2.y,f))$. Then the position $1.1.1.1$, $1.1.1.3$,$1.1.1.3.1$, and $1.1.1.4$ are $(\Sigma,\{x,y\})$-representative, while $1.1.1.2$ and $1.1.1$ are not. Additionally $1.1.1.1$, $1.1.1.3$, and $1.1.1.4$ are $(\Sigma,\{x,y\})$-maximal, and  $m(r,\Sigma,\{x,y\})= \{1.1.1.1, 1.1.1.3, 1.1.1.4\}$.
\end{example}
One should consider the maximal positions of a given $\lambda$-term as the positions that can be replaced by a free variable whose arguments are bound variables when in $\eta$-normal form. Replacing maximal terms by free variables is referred to as \textit{lifting}.
\begin{definition}
\label{def:lifting}
Let $\lambda\overline{w_m}.f'(\overline{r_k})\in\Lambda$ such that $f'\in\Sigma$ of type $\overline{\delta_k}\rightarrow\delta_{k+1}$, $w_i$ has type $\gamma_i$, for $1\leq i\leq m$. Furthermore, for all  $1\leq i<j\leq k$, let $H_{q_1}^{r_i}$ and $H_{q_2}^{r_j}$ be pairwise distinct free variables fresh in $\lambda\overline{w_m}.f'(\overline{r_k})$ of type $\overline{\gamma_m}\rightarrow \nu_{q_1}$ and $\overline{\gamma_m}\rightarrow \nu_{q_2}$ respectively, where $q_1\in m(r_i,\Sigma,\overline{w_m})$, and $q_2\in m(r_j,\Sigma,\overline{w_m})$, $\nu_{q_1}$ is the type of $r_i\vert_{q_1}$, and $\nu_{q_2}$ is the type of $r_j\vert_{q_2}$. The lifting of $\lambda\overline{w_m}.f'(\overline{r_k})$, denoted by $\mathit{lift}(\lambda\overline{w_m}.f'(\overline{r_k}),\Sigma) $ is the term $\lambda\overline{w_m}.f'(\overline{r'_k})$ where for each $1\leq i\leq k$ and $q\in m(r_i,\Sigma,\overline{w_m})$, $r_i'\vert_q = H_q^{r_i}(\overline{w_m})$.
\end{definition}
\begin{example}
\label{ex:lift}
Consider the term from Example~\ref{ex:maximal}. Then $$\mathit{lift}(r,\Sigma) = \lambda x.\lambda y. f(Z(H_1(x,y),x,H_2(x,y),H_3(x,y)),$$ where 
$H_1$ and $H_3$ have type $\alpha\rightarrow\alpha\rightarrow ((\alpha\rightarrow\alpha)\rightarrow\alpha)$, and $H_2$ has type $\alpha\rightarrow\alpha\rightarrow ((\alpha\rightarrow\alpha)\rightarrow(\alpha\rightarrow\alpha)\rightarrow\alpha)$.
\end{example}

When constructing a \textit{$(g,\Lambda_{sp})$-pseudo-pattern} from the lifting of a  $(g,\Lambda)$-pseudo-pattern, we only need to consider the term $f$ is applied to, or more precisely the term at position $1.1.1$ of the lifted   $(g,\Lambda)$-pseudo-pattern. Given an $\Lambda$-tight generalization $g$ and its $(g,\Lambda)$-pseudo-pattern $g'$ we refer to the term at position $1.1.1$ as the \textit{core of $g'$} defined as $\mathit{core}(g') =  \mathit{lift}(g',\Sigma)\vert_{1.1.1}$. Additionally, observe that $\lambda x.\lambda y.f(Y(x,y))\leq_{\mathcal{L}}  \mathit{lift}(g',\Sigma)$ as  $\mathit{lift}(g',\Sigma)$ is essentially $g'$ with some subterms replaced by variable expressions.  Thus, we can define the \textit{$(g,\Lambda_{sp})$-pseudo-pattern} through a substitution into $Y$ as is the case in the following definition. 
\begin{definition}
\label{def:gpseudopatternsp}
Let $g$ be an $\Lambda$-tight generalization of $s\triangleq_{\Lambda} t$, and $g'= \lambda x.\lambda y.f(Y(s_1,s_2)$ the \textit{$(g,\Lambda)$-pseudo-pattern} $\mathit{G}_{\Lambda}(g,Z,Y,\sigma_1,\sigma_2)$ (see end of previous section).  Then the \textit{$(g,\Lambda_{sp})$-pseudo-pattern}, denoted by $\mathit{G}_{\Lambda_{sp}}(g,Z,Y,\sigma_1,\sigma_2)$ is $ \lambda x.\lambda y.f(R(x,y))\{R\mapsto \lambda x.\lambda y.\mathit{core}(g')\}$ where $R\not\in \mathcal{FV}(\mathit{lift}(g',\Sigma))$.
\end{definition}

\begin{example}
\label{ex:ppforsp}
Consider the $\Lambda$-tight generalization $g$: $$\lambda x.\lambda y.f(Z(W(x,Z(f(a),a)),W(Z(a,f(a)),y))).$$
A possible pair of substitutions from $\mathcal{GS}_{gnd}(s,t,g)$ is  
$\sigma_1 = \{Z\mapsto \lambda x.\lambda y.x, W\mapsto \lambda x.\lambda y.x\}$ and $\sigma_2 =\{Z\mapsto \lambda x.\lambda y.y, W\mapsto \lambda x.\lambda y.y\}.$
The  \textit{$(g,\Lambda)$-pseudo-pattern} $\mathit{G}_{\Lambda}(g,Z,Y,\sigma_1,\sigma_2)$ is as follows: $g'=\lambda x.\lambda y.f(Y(W(x,f(a)),W(f(a),y))).$
Thus, one possible \textit{$(g,\Lambda_{sp})$-pseudo-pattern}, is as follows: $g^*=\lambda x.\lambda y.f(Y(W(x,H_1(x,y)),W(H_2(x,y),y))),$
where 
$$\sigma_1 = \{Z\mapsto \lambda x.\lambda y.x, W\mapsto \lambda x.\lambda y.x, H_1\mapsto \lambda x.\lambda y.f(a), H_2\mapsto \lambda x.\lambda y.f(a)\}$$
$$\sigma_2 =\{Z\mapsto \lambda x.\lambda y.y, W\mapsto \lambda x.\lambda y.y, H_1\mapsto \lambda x.\lambda y.f(a), H_2\mapsto \lambda x.\lambda y.f(a)\}$$
are substitutions such that $(\sigma_1,\sigma_2)\in\mathcal{GS}_{gnd}(s,t,g^*)$. 

\end{example}

\begin{lemma}
\label{lem:gpseudopatternsp}
Let $g$ be an $\Lambda$-tight generalization of  $s\triangleq_{\Lambda_{sp}} t$. Then every $(g,\Lambda_{sp})$-pseudo-pattern $g'$ is a generalization of $s\triangleq_{\Lambda_{sp}} t$.  
\end{lemma}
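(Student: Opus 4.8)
The plan is to establish two things about $g'$: that $g'\in\Lambda_{sp}$, i.e. $g'$ is a superpattern, and that there are substitutions recovering $s$ and $t$ from $g'$. Throughout I write $g''=\mathit{G}_{\Lambda}(g,Z,Y,\sigma_1,\sigma_2)=\lambda x.\lambda y.f(Y(r_1,r_2))$ for the underlying $\Lambda$-pseudo-pattern, so that by Definition~\ref{def:gpseudopatternsp} the term $g'$ is, up to $\beta$, $\lambda x.\lambda y.f(\mathit{core}(g''))$ with $\mathit{core}(g'')=\mathit{lift}(g'',\Sigma)\vert_{1.1.1}$. I would treat the recovery direction using Lemma~\ref{lem:pseudopattern} and the structural direction using Definition~\ref{def:lifting}, since the former already does the work of generalizing $s\triangleq t$ and the latter is designed precisely to repair the superpattern conditions.

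For the recovery direction, I would first invoke Lemma~\ref{lem:pseudopattern} to obtain substitutions $\mu_1,\mu_2$ with $g''\mu_1=s$ and $g''\mu_2=t$ (the left/right projection of $Y$ together with the residual $\sigma_1',\sigma_2'$). The key observation is that $\mathit{lift}$ only replaces each $(\Sigma,\{x,y\})$-maximal subterm $u_q$ of the argument $Y(r_1,r_2)$ of $f$ by a fresh occurrence $H_q(x,y)$. Hence the substitution $\eta$ sending each $H_q$ to $\lambda x.\lambda y.u_q$ undoes the lifting, i.e. $g'\eta=g''$: each $H_q(x,y)$ $\beta$-reduces back to $u_q$ with the correct references to the context's $x,y$ (capture is avoided by abstracting and re-applying the bound variables). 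Since $\eta$ touches only the fresh lifting variables while $\mu_1,\mu_2$ touch only the variables of $g''$, the compositions are well defined and $g'\eta\mu_1=(g'\eta)\mu_1=g''\mu_1=s$, and likewise $g'\eta\mu_2=t$. This shows $g'$ is a generalization of $s\triangleq_{\Lambda_{sp}}t$ in the instance ordering.

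For the structural direction, I would verify the two conditions of Definition~\ref{def:superpattern} for every occurrence $X(\overline{u})$ of a free variable in $g'$. Condition~1 is immediate from Definition~\ref{def:lifting}: every argument of a free variable is now either a bound variable, a lifted term $H_q(x,y)$ (which has a free head), or a subterm lying strictly above a maximal position, which is therefore non-representative and thus free- or bound-headed; in all cases the argument, in $\eta$-normal form, is a bound variable or free-headed. Condition~2 (pairwise distinctness of the bound-variable arguments) is the delicate point: lifting never touches bound-variable arguments, so it must already hold in $g''$ and be transported from $g$. I would argue that the pseudo-pattern substitution $Z\mapsto\lambda\overline{b_m}.Y(r_1(\overline{b_m}),r_2(\overline{b_m}))$ preserves distinctness, since it leaves bound-variable arguments untouched and converts every former $Z$-headed argument into a $Y$-headed (hence free-headed, not bound-variable) argument, while the single newly introduced variable $Y$ receives two arguments which, whenever both are bound variables, must specialize to $x$ (on the $s$-side) and to $y$ (on the $t$-side) under $\mu_1,\mu_2$ and are therefore distinct.

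The main obstacle is exactly Condition~2. To show that no free-variable occurrence of $g''$ carries two identical bound-variable arguments, one must control how bound variables are propagated, and possibly duplicated, when the closed terms $r_1,r_2$ arising from the ground substitutions $\sigma_1,\sigma_2$ are applied to the (substituted) arguments of the nested $Z$-occurrences and $\beta$-normalized. I would discharge this by induction on the structure of $g$, using that $g\in\Lambda_{sp}$ already satisfies Condition~2 and that the ground recovery substitutions force the $r_1$- and $r_2$-descendants of each generalization variable to reduce to the two distinct bound variables $x$ and $y$; every remaining structural requirement follows directly from Definition~\ref{def:lifting} and the recovery argument above.
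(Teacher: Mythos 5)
Your recovery argument is exactly the paper's proof: the paper's entire justification for this lemma is one sentence saying that one proceeds as in Lemma~\ref{lem:pseudopattern} and then \emph{extends} the generalizing substitutions to reintroduce the subterms replaced during lifting, which is precisely your substitution $\eta$ mapping each fresh $H_q$ to $\lambda x.\lambda y.u_q$ composed with the $\mu_1,\mu_2$ obtained from Lemma~\ref{lem:pseudopattern} (in the paper's intended reading these are merged into single substitution pairs in $\mathcal{GS}(s,t,g')$, but that is the same computation, and your observation that $\eta$ and $\mu_i$ have disjoint relevant domains and that re-abstraction avoids capture is the right justification). Where you genuinely diverge is the structural half: the paper nowhere proves that the lifted term is a superpattern --- membership in $\Lambda_{sp}$ is left implicit in the design of $\mathit{lift}$ (Definition~\ref{def:lifting}), whose stated purpose is to replace every $(\Sigma,\overline{w_m})$-maximal, constant- or abstraction-headed subterm by a fresh free-variable expression. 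Your verification of Condition~1 of Definition~\ref{def:superpattern} (arguments at maximal positions become $H_q$-headed; positions strictly above a maximal position are non-representative, hence variable-headed) is correct and makes explicit what the paper assumes. Your treatment of Condition~2, however, is only a sketch: you correctly identify that the pseudo-pattern substitution $Z\mapsto\lambda\overline{b_m}.Y(r_1(\overline{b_m}),r_2(\overline{b_m}))$ together with $\beta$-normalization against the closed terms $r_1,r_2$ could in principle duplicate bound-variable arguments, but you defer the resolution to an unexecuted induction on $g$, and your appeal to ``$g\in\Lambda_{sp}$'' is not literally licensed by the hypothesis, which only says $g$ is $\Lambda$-tight (superpattern-hood of $g$ has to be imported from the context of Theorem~\ref{nullaryOne}, where $g$ is drawn from a complete subset of $\mathcal{G}_{\Lambda_{sp}}(s,t)$, and one must check it survives the tightening of Lemma~\ref{lem:minimal}). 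Since the paper itself omits this verification entirely, your proposal is at least as complete as the published proof and more informative; to fully close the gap you would need to carry out the Condition~2 induction, or argue, as the paper implicitly does, that any violating repeated bound-variable arguments occur inside constant- or abstraction-headed subterms and are therefore erased by the lifting.
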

\begin{proof}
Similar to the proof of Lemma~\ref{lem:pseudopattern} except we have to extend the generalizing substitutions used to construct the $(g,\Lambda)$-pseudo-pattern associated with $g'$ to reintroduce the terms replaced by variables during the lifting process. 
\end{proof}
Note  $\mathit{G}_{\Lambda_{sp}}(g,Z,Y,\sigma_1,\sigma_2)\leq_{\Lambda} \mathit{G}_{\Lambda}(g,Z,Y,\sigma_1,\sigma_2)$, but it may be the case that  $\mathit{G}_{\Lambda}(g,Z,Y,\sigma_1,\sigma_2)\not \in\mathcal{G}_{\Lambda_{sp}}(s,t)$.
\subsection{Nullarity}
\label{sec:nullary}

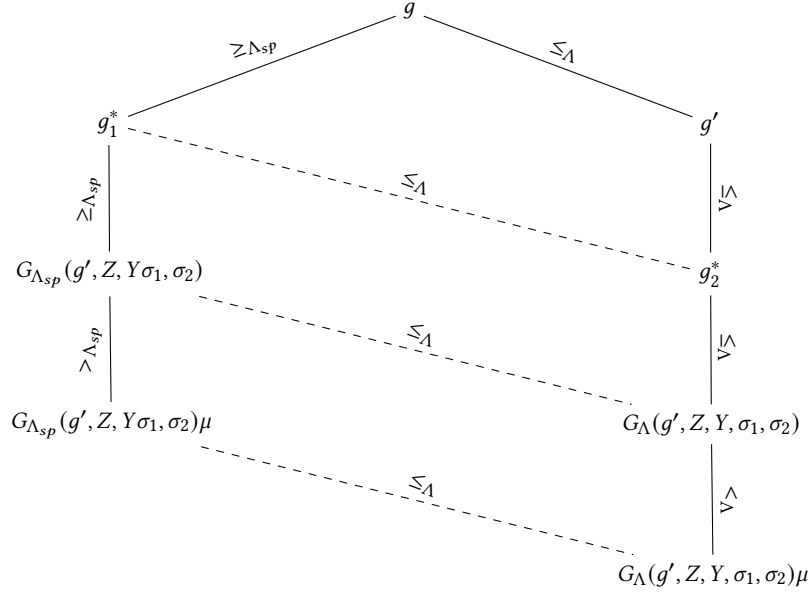
\begin{figure}
\centering
\begin{tikzpicture}
[
level 1/.style = {black, sibling distance = 8cm},
level 2/.style = {black, level distance = 2cm},
level 4/.style = {black, sibling distance = 3cm},
]
\node {$g$} 
child [growth parent anchor = east]  {node (l1) {$g^*_1$} 
child [growth parent anchor = east]  {node (l3) [xshift=-7pt] {$\mathit{G}_{\Lambda_{sp}}(g',Z,Y\sigma_1,\sigma_2)$}
child [growth parent anchor = east] {node (l5) [left,xshift=3pt] {$\mathit{G}_{\Lambda_{sp}}(g',Z,Y\sigma_1,\sigma_2)\mu$}
edge from parent node [right,rotate=90,midway,above] {$>_{\Lambda_{sp}}$}
}
edge from parent node [right,rotate=90,midway,above] {$\geq_{\Lambda_{sp}}$}
}
edge from parent node [right,rotate=20,midway,above] {$\geq_{\Lambda_{sp}}$}}
child {node {$g'$}
child [growth parent anchor = east] {node (l2) [] {$g^*_2$} 
child [growth parent anchor = east]  {node (l4)[xshift=-7pt] {$\mathit{G}_{\Lambda}(g',Z,Y,\sigma_1,\sigma_2)$}
child [growth parent anchor = east] {node (l6) [left,xshift=3pt] {$\mathit{G}_{\Lambda}(g',Z,Y,\sigma_1,\sigma_2)\mu$}
edge from parent node [right,rotate=-90,midway,above] {$<_{\Lambda}$}
}
edge from parent node [right, rotate=-90,xshift=-10pt,yshift=6pt] {$\leq_{\Lambda}$}
}
edge from parent node [right, midway,above, rotate=-90] {$\leq_{\Lambda}$}}
edge from parent node [right, midway,above, rotate=-20] {$\leq_{\Lambda}$}};
\draw[dashed](l1)--(l2) node[midway,above, rotate=-20]{$\leq_{\Lambda}$};
\draw[dashed](l3)--(l4) node[midway,above, rotate=-20]{$\leq_{\Lambda}$};
\draw[dashed](l5)--(l6) node[midway,above, rotate=-20]{$\leq_{\Lambda}$};
\end{tikzpicture}
\caption {Where $g$ is pattern-derived , $g'$ is $\Lambda$-tight, $g^*_1$ is a  $(g, \Lambda_{sp}) \text{-pseudopattern}$ lifted from  $g^*_2$, a  $(g, \Lambda)$ \text{-pseudopattern}, and $\mu = \{Y\mapsto \lambda w_1.\lambda w_2. Y(Y(w_1,w_2),Y(w_1,w_2))\}$.}
\end{figure}
Observe that pseudo-pattern generalizations have a simplified structure compared to $\mathcal{L}$-tight generalizations. In particular, $f$ is applied to a free variable that takes two arguments and there exists $(\sigma_1,\sigma_2)\in \mathcal{GS}_{gnd}(s,t,g)$ such that  $Y\sigma_1 =\lambda x\lambda y. x$ and $ Y\sigma_2=\lambda x\lambda y. y$. Our nullarity argument rests on the following observation concerning the pattern generalization of $s$ and $t$: 

\begin{lemma}
\label{lem:pattstrictorder}
$\lambda x.\lambda y.f(R(x,y))<_{\mathcal{L}} \lambda x.\lambda y.f(R(R(x,y),R(x,y)))$ where $\mathcal{L}\in \{\Lambda,\Lambda_{sp}\}$.
\end{lemma}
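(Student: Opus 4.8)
The plan is to unfold $<_{\mathcal{L}}$ into its two defining halves and treat them separately. For the $\leq_{\mathcal{L}}$ direction I would simply exhibit a witnessing substitution: taking $\sigma = \{R\mapsto \lambda w_1.\lambda w_2. R(R(w_1,w_2),R(w_1,w_2))\}$ and applying it to $\lambda x.\lambda y.f(R(x,y))$ yields, after $\beta$-reduction, exactly $\lambda x.\lambda y.f(R(R(x,y),R(x,y)))$. Since both terms are superpatterns (hence members of either variant of $\mathcal{L}$), this establishes $\lambda x.\lambda y.f(R(x,y)) \leq_{\mathcal{L}} \lambda x.\lambda y.f(R(R(x,y),R(x,y)))$ uniformly for $\mathcal{L}\in\{\Lambda,\Lambda_{sp}\}$.

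The substance is the strictness, i.e.\ $\lambda x.\lambda y.f(R(R(x,y),R(x,y))) \not\leq_{\mathcal{L}} \lambda x.\lambda y.f(R(x,y))$. I would argue by contradiction: suppose some $\sigma$ satisfies $(\lambda x.\lambda y.f(R(R(x,y),R(x,y))))\sigma =_{\alpha\beta\eta} \lambda x.\lambda y.f(R(x,y))$. Since $R$ is the only free variable of the left term, only $R\sigma$ matters; because substitutions are type-preserving and terms are in $\eta$-long $\beta$-normal form, $R\sigma$ has the shape $\rho = \lambda w_1{:}\alpha.\lambda w_2{:}\alpha. t$ with $t$ of base type $\alpha$. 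The crucial structural observation is that $w_1$ and $w_2$ are of base type $\alpha$, so they never appear as the head of an application; consequently the normal forms $\rho(x,y) = t\{w_1\mapsto x, w_2\mapsto y\}$ and $\rho(\rho(x,y),\rho(x,y)) = t\{w_1\mapsto \rho(x,y), w_2\mapsto \rho(x,y)\}$ introduce no new $\beta$-redexes and the results are already normal. This lets occurrence counts be read off directly.

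I would then count occurrences of the bound variables. Writing $c_1 = \mathit{occ}(w_1,t)$ and $c_2 = \mathit{occ}(w_2,t)$, the normal form of $\rho(\rho(x,y),\rho(x,y))$ contains $(c_1+c_2)\,c_1$ occurrences of $x$ and $(c_1+c_2)\,c_2$ occurrences of $y$, since each of the $c_1+c_2$ slots left by $w_1,w_2$ in $t$ is filled by a copy of $\rho(x,y)$ carrying $c_1$ copies of $x$ and $c_2$ of $y$, and $x,y$ do not occur in $t$ itself. Matching against the target $f(R(x,y))$, which has a single $x$ and a single $y$, forces $(c_1+c_2)c_1 = 1$ and $(c_1+c_2)c_2 = 1$ over the nonnegative integers; these equations give $c_1=c_2$ and hence $2c_1^2 = 1$, which has no integer solution. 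This contradiction rules out any such $\sigma$ and yields the strict inequality for both choices of $\mathcal{L}$ simultaneously.

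The main obstacle I anticipate is making the occurrence-counting step airtight: it must be justified that passing to $\alpha\beta\eta$-normal forms does not alter these counts. This is exactly where the base-type argument above does the work, since the arguments of $R$ have type $\alpha$, instantiation creates no hidden redexes and $\eta$-expansion cannot duplicate or erase variable occurrences, so the arithmetic is performed on genuinely normal terms. A secondary point to confirm is that enlarging $\dom(\sigma)$ beyond $R$ is harmless, which is immediate since $R$ is the sole free variable of the term being instantiated.
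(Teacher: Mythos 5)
Your proof is correct, but it takes a genuinely different route from the paper's. The paper establishes the non-trivial half ($\lambda x.\lambda y.f(R(R(x,y),R(x,y)))\not\leq_{\mathcal{L}}\lambda x.\lambda y.f(R(x,y))$) by a case analysis on the shape of $R\sigma$ --- head a projection, a constant $f'\in\Sigma$, or a free variable $R'$ --- killing the first two cases by shape mismatch and the third by the occurrence bound $\mathit{occ}(R,\lambda x.\lambda y.f(R(R(x,y),R(x,y))))\leq \mathit{occ}(R',t^*)$. You instead normalize $R\sigma$ to $\lambda w_1.\lambda w_2.t$ and count occurrences of the outermost bound variables, reducing all cases at once to the system $(c_1+c_2)c_1=(c_1+c_2)c_2=1$, which has no solution over the nonnegative integers; your supporting observations (no $\beta$-redexes arise since $w_1,w_2$ have base type $\alpha$, capture-avoidance ensures $t$ contributes no occurrences of $x,y$, and $\eta$-expansion preserves the counts) are exactly what is needed to make the arithmetic legitimate on normal forms. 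Your invariant is in fact somewhat more robust than the paper's case (3): the stated inequality on $\mathit{occ}(R',t^*)$ can fail when the arguments of $R'$ discard $w_1$ and $w_2$ (e.g.\ $R\mapsto\lambda w_1.\lambda w_2.R'(c)$ leaves a single occurrence of $R'$), a subcase the paper glosses over and which would need a separate shape argument, whereas your count of $x$ and $y$ detects it immediately (the counts drop to zero). You also make the easy $\leq_{\mathcal{L}}$ half explicit with a witnessing substitution, which the paper leaves implicit. What the paper's case split buys is an explicit inventory of the possible instantiations of $R$; what yours buys is uniformity and a single airtight invariant.
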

\begin{proof}
 We show that there does not exists a subsitution $\sigma$ such that $\lambda x.\lambda y.f(R(R(x,y),R(x,y)))\sigma$ $= \lambda x.\lambda y.f(R(x,y))$. By construction, $R$ has type $\alpha\rightarrow\alpha\rightarrow\alpha$ and $f$ has type $\alpha\rightarrow\alpha$, thus the only possible substitutions into $R$ are 
\begin{itemize}
    \item[1)]$\sigma = \{R\mapsto \lambda w_1\lambda w_2. w_1\}$ or $\sigma = \{R\mapsto \lambda w_1\lambda w_2. w_2\}$,
    \item[2)]$\sigma = \{R\mapsto \lambda w_1\lambda w_2. f'(t_1,\cdots,t_k)\}$ where $f'\in \Sigma$ with type $\overline{\gamma_k}\rightarrow \alpha$, $t_i$ has type $\gamma_i$ for $1\leq i\leq k$ and may contain $w_1$ and/or $w_2$, and $k\geq 0$, or
    \item[3)]$\sigma = \{R\mapsto \lambda w_1\lambda w_2. R'(t_1,\cdots, t_k)\}$ Where $R'$ is free-variable of type $\overline{\delta_{k}}\rightarrow \alpha$ and $t_j$ has type $\delta_j$ and may  contain $w_1$ and/or $w_2$, for $1\leq j\leq k$.
\end{itemize}
Case (1): $\lambda x.\lambda y. f(R(R(x,y),R(x,y)))\sigma \in \{\lambda x.\lambda y. f(x),\ \lambda x.\lambda y. 
f(y)\}$ neither of which are $=_{\alpha\beta\eta}$ to $\lambda x.\lambda y. f(R(x,y))$. Case (2):  $\lambda 
x.\lambda y. f(R(R(x,y),R(x,y)))\sigma = \lambda x.\lambda y. f(f'(t_1',\cdots,t_k'))$ which is not a generalization 
of $s\triangleq_{\mathcal{L}} t$. Case (3): $\lambda x.\lambda y. f(R(R(x,y),R(x,y)))\sigma=  t^*$ such that 
$$\mathit{occ}(R,\lambda x.\lambda y.f(R(R(x,y),R(x,y))))\leq \mathit{occ}(R',t^*).$$ Thus, there does not exist $\sigma$ such that $\lambda x.\lambda y. f(R(R(x,y),R(x,y)))\sigma$ $=_{\alpha\beta\eta}\lambda x.\lambda y. f(R(x,y)) $. 
\end{proof}

Notice that Lemma~\ref{lem:pattstrictorder} does not depend on the arguments of the leaf occurrences of $R$ being $x$ and $y$. We can replace the occurrences of $x$ and $y$ by terms of type $\alpha$, and the relation still holds. For example, within $\lambda x.\lambda y. f(Y(r_1,r_2))$, a $(g,\mathcal{L})$-pseudo-pattern where $\mathcal{L}\in \{ \Lambda, \Lambda_{sp}\}$,  we  replaced $x$ and $y$ by $r_1$ and $r_2$, respectively; this implies the following:

\begin{corollary}
\label{cor:pseudopattstrictorder}
Let $g=\lambda x.\lambda y. f(Z(\overline{s_m}))$ be an $\mathcal{L}$-tight generalization of $s\triangleq_{\mathcal{L}} t$, where $\mathcal{L}\in \{ \Lambda, \Lambda_{sp}\}$, $Y$ a free variable such that $Y\not \in \mathcal{FV}(g)$, and $(\sigma_1,\sigma_2)\in \mathcal{GS}_{gnd}(s,t,g)$. Then 
$$\mathit{G}_{\mathcal{L}}(g,Z,Y,\sigma_1,\sigma_2)<_{\mathcal{L}}\mathit{G}_{\mathcal{L}}(g,Z,Y,\sigma_1,\sigma_2)\{Y\mapsto \lambda w_1.\lambda w_2. Y(Y(w_1,w_2),Y(w_1,w_2))\}.$$ 
\end{corollary}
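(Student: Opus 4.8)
The plan is to derive the corollary directly from Lemma~\ref{lem:pattstrictorder} together with the remark immediately preceding the statement, so that the only real work is bookkeeping with the pseudo-pattern normal form. First I would unfold the definitions using the simplifying convention that a pseudo-pattern, whether over $\Lambda$ or $\Lambda_{sp}$, has the shape $P := \mathit{G}_{\mathcal{L}}(g,Z,Y,\sigma_1,\sigma_2) = \lambda x.\lambda y. f(Y(r_1,r_2))$ with $r_1,r_2$ of type $\alpha$, $Y$ fresh of type $\alpha\rightarrow\alpha\rightarrow\alpha$, and $Y\notin\mathcal{FV}(r_1)\cup\mathcal{FV}(r_2)$. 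Writing $\mu = \{Y\mapsto \lambda w_1.\lambda w_2. Y(Y(w_1,w_2),Y(w_1,w_2))\}$, I would then apply $\mu$ and $\beta$-normalize to obtain $P\mu =_{\beta} \lambda x.\lambda y. f(Y(Y(r_1,r_2),Y(r_1,r_2)))$, which is exactly the right-hand side of Lemma~\ref{lem:pattstrictorder} with $R$ renamed to $Y$ and the leaf occurrences $x,y$ replaced by $r_1,r_2$.

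With this normal form in hand the two halves of the strict relation split cleanly. The $\leq_{\mathcal{L}}$ direction is witnessed by $\mu$ itself: since $P\mu$ is a substitution instance of $P$, we have $P \leq_{\mathcal{L}} P\mu$ by definition of the quasi-order. For the strict part $P\mu \not\leq_{\mathcal{L}} P$, I would invoke the remark following Lemma~\ref{lem:pattstrictorder}, which records that the lemma's case analysis never uses the specific identity of the leaves, only that the doubled head is applied to terms of type $\alpha$. Transcribing the three cases for a candidate $\sigma$ acting on $Y$: a projection erases $Y$ and returns a copy of $r_1$ or $r_2$, which has no $Y$-head and so cannot equal $Y(r_1,r_2)$; a constant head $f'$ produces a term whose subterm under $f$ has a constant head, again unlike $Y(r_1,r_2)$; and a free-variable head $R'$ yields a term containing at least as many occurrences of $R'$ as $P\mu$ has of $Y$ (namely three), which cannot match the single argument-head occurrence in $P$. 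Hence no $\sigma$ gives $P\mu\sigma =_{\alpha\beta\eta} P$, and combining the two directions yields $P <_{\mathcal{L}} P\mu$.

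The step I expect to require the most care is justifying that the occurrence-counting obstruction of Lemma~\ref{lem:pattstrictorder}, Case~(3), survives the replacement of $x,y$ by genuine terms $r_1,r_2$ that may themselves carry free variables (the lifted variables $H_q$ of the $\Lambda_{sp}$ construction, or $W$-style variables inherited from $g$). The worry is that $\sigma$ may now act on those auxiliary variables as well, potentially reintroducing a $Y$-head or rebalancing the counts; I would argue that, because $Y$ is fresh and absent from $r_1,r_2$ and the quantity governing the argument is the number of occurrences of the image of the doubled head rather than of all free variables, the occurrence inequality still forbids collapsing the nesting. Finally, since the statement is uniform in $\mathcal{L}\in\{\Lambda,\Lambda_{sp}\}$, I would check that both $P$ and $P\mu$ genuinely lie in $\Lambda_{sp}$ when $\mathcal{L}=\Lambda_{sp}$: each free-variable argument in $P\mu$ is either a type-$\alpha$ leaf or a free-variable-headed application $Y(r_1,r_2)$, and the two identical copies sit at distinct argument positions of the outer $Y$ without being bound variables, so both superpattern conditions hold and the comparison takes place inside the intended fragment.
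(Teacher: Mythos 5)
Your overall route is exactly the paper's: the paper gives no separate proof of Corollary~\ref{cor:pseudopattstrictorder}, deriving it from Lemma~\ref{lem:pattstrictorder} together with the remark that the lemma's case analysis never uses that the leaves are literally $x$ and $y$. Your two-direction split ($\mu$ itself witnesses $\leq_{\mathcal{L}}$, and the projection/constant-head/free-variable-head trichotomy rules out $P\mu\sigma =_{\alpha\beta\eta} P$) is a faithful transcription of that, and your check that both $P$ and $P\mu$ remain in $\Lambda_{sp}$ is a detail the paper leaves tacit and is worth making explicit.

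There is, however, one load-bearing claim that is false as stated: the assumption $Y\notin\mathcal{FV}(r_1)\cup\mathcal{FV}(r_2)$, which you justify by freshness of $Y$. Freshness is only relative to $g$ ($Y\notin\mathcal{FV}(g)$); Definition~\ref{def:gpseudopattern} substitutes $\lambda \overline{b_m}.Y(r_1(\overline{b_m}),r_2(\overline{b_m}))$ for \emph{every} occurrence of $Z$, including occurrences nested inside the arguments $s_i$ (this is precisely the role of the terms $q_i = s_i\{Z\mapsto\cdots\}$), so $Y$ re-enters the arguments of the outermost $Y$ whenever $Z$ occurs nested in $g$. Concretely, the paper's own example of an $\mathcal{L}$-tight generalization, $\lambda x.\lambda y.f(Z(Z(x,y),Z(x,y)))$, together with the projection pair $(\sigma_1,\sigma_2)$, yields the pseudo-pattern $\lambda x.\lambda y.f(Y(Y(x,y),Y(x,y)))$, for which $\mathit{occ}(Y,P)=3$, not $1$. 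Consequently your Case~(3) count, ``three occurrences of $R'$ versus the single argument-head occurrence in $P$,'' fails on legitimate inputs, and in your projection case the substitution also acts \emph{inside} $r_1,r_2$ rather than merely selecting a copy of one of them. The repair is to run the argument with multiplicities: $\mu$ strictly increases $\mathit{occ}(Y,\cdot)$ (each occurrence triples and argument duplication only adds more); a projection for $Y$ leaves a term with no free occurrence of $Y$ at all, while $P$ has one, and $\alpha\beta\eta$-equality preserves free variables; a constant-headed image changes the head under $f$ away from a free variable; and a free-variable-headed image whose body uses both binders cannot bring the occurrence count below $\mathit{occ}(Y,P\mu)>\mathit{occ}(Y,P)$, with a separate (easy, but not free) argument needed when a binder is dropped and the term collapses. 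To be fair, the paper's one-line justification glosses the same point — its simplifying convention $P=\lambda x.\lambda y.f(Y(r_1,r_2))$ does not exclude $Y$ from $r_1,r_2$ — so your write-up matches the paper's level of detail everywhere except that you assert the freshness claim explicitly, where it is refutable.
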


\noindent Now, we are ready to prove the nullarity of anti-unification over deep $\lambda$-terms.
\begin{theorem}
\label{nullaryOne}
For $\mathcal{L}\in \{\Lambda,\Lambda_{sp}\}$, $\mathcal{L}$-anti-unification is \textit{nullary}.
\end{theorem}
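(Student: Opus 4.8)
The plan is to argue by contradiction: assume that for the problem $s\triangleq_{\mathcal{L}}t$ a minimal complete set $M=\mcsg_{\mathcal{L}}(s,t)$ exists, and then violate its minimality (condition (3) of the $\mcsg$ definition). Since $M$ is complete, Lemma~\ref{lem:canonForm} guarantees that $M$ contains an $\mathcal{L}$-pattern-derived generalization $g$. The entire argument then reduces to one claim: \emph{every $\mathcal{L}$-pattern-derived generalization admits a strictly more specific generalization}. Granting this for $g$, there is a generalization $G^{+}$ with $g<_{\mathcal{L}}G^{+}$; by completeness (condition (2)) there is some $h\in M$ with $G^{+}\leq_{\mathcal{L}}h$, whence $g<_{\mathcal{L}}h$. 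In particular $g\neq h$ and $g\leq_{\mathcal{L}}h$, directly contradicting condition (3). Hence no such $M$ exists and $\mathcal{L}$-anti-unification is nullary.

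To produce $G^{+}$ I would run the refinement pipeline assembled in the preceding subsections. Starting from the pattern-derived $g$, Lemma~\ref{lem:minimal} together with Observation~\ref{cor:remainpd} yields an $\mathcal{L}$-tight, still pattern-derived generalization $g'$ with $g\leq_{\mathcal{L}}g'$. Fixing a pair $(\sigma_1,\sigma_2)\in\mathcal{GS}_{gnd}(s,t,g')$ and a fresh variable $Y$, I form the $(g',\mathcal{L})$-pseudo-pattern $\mathit{G}_{\mathcal{L}}(g',Z,Y,\sigma_1,\sigma_2)$, abbreviated $G$, which is a generalization by Lemma~\ref{lem:pseudopattern} (for $\Lambda$) or Lemma~\ref{lem:gpseudopatternsp} (for $\Lambda_{sp}$). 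For $\mathcal{L}=\Lambda$ the term $G$ is \emph{literally} an instance of $g'$ (it arises from $g'$ by instantiating $Z$), so $g'\leq_{\Lambda}G$. Finally, Corollary~\ref{cor:pseudopattstrictorder} supplies the strict step $G<_{\mathcal{L}}G\mu$ for $\mu=\{Y\mapsto\lambda w_1.\lambda w_2. Y(Y(w_1,w_2),Y(w_1,w_2))\}$, and $G\mu$ is again a generalization, since substituting the left and right projections for $Y$ still collapses $G\mu$ to $s$ and $t$ respectively, exactly as the pattern generalization does.

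Setting $G^{+}=G\mu$, the chain $g\leq_{\mathcal{L}}g'\leq_{\mathcal{L}}G<_{\mathcal{L}}G\mu$ gives $g\leq_{\mathcal{L}}G\mu$. Strictness is immediate: were $G\mu\leq_{\mathcal{L}}g$, then combining with $g\leq_{\mathcal{L}}G$ would give $G\mu\leq_{\mathcal{L}}G$, contradicting $G<_{\mathcal{L}}G\mu$. Thus $g<_{\mathcal{L}}G\mu$, which closes the claim and hence the contradiction of the first paragraph. Note that the strict engine $G<_{\mathcal{L}}G\mu$ ultimately rests on Lemma~\ref{lem:pattstrictorder}: doubling the head variable $Y$ can never be undone by a single capture-avoiding substitution, and this is insensitive to the (type-$\alpha$) arguments sitting below the leaf occurrences of $Y$.

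The delicate point — and the step I expect to be the main obstacle — is the $\Lambda_{sp}$ instance of the pipeline, where the clean relation $g'\leq_{\mathcal{L}}G$ can fail. Over $\Lambda_{sp}$ the $\Lambda$-pseudo-pattern may leave the superpattern fragment, so it must be replaced by its lifting $\mathit{G}_{\Lambda_{sp}}(g',Z,Y,\sigma_1,\sigma_2)$; but lifting replaces constant- and abstraction-headed subterms by fresh variables, which \emph{generalizes} rather than instantiates, so $\mathit{G}_{\Lambda_{sp}}$ need not be a syntactic instance of $g'$ (capture-avoidance blocks reintroducing the top-level bound variables $x,y$ through $g'$'s arguments). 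I would resolve this exactly as the accompanying figure indicates: carry the $\Lambda$-pseudo-pattern along as a measuring stick, exploit the comparison $\mathit{G}_{\Lambda_{sp}}(g',\dots)\leq_{\Lambda}\mathit{G}_{\Lambda}(g',\dots)$ together with Lemma~\ref{lem:gpseudopatternsp}, and transport the strict step of Corollary~\ref{cor:pseudopattstrictorder} (which is stated verbatim for $\Lambda_{sp}$) back up to $g$. Verifying that the lifted refinement $\mathit{G}_{\Lambda_{sp}}(g',\dots)\mu$ remains a bona fide superpattern generalization, and that the interplay of the two quasi-orderings $\leq_{\Lambda}$ and $\leq_{\Lambda_{sp}}$ indeed places this strictly above $g$, is where the genuine care lies; everything else is bookkeeping on the chain of inequalities above.
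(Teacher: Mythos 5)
Your proposal is correct and follows the paper's own proof essentially step for step: extract a pattern-derived $g$ from the assumed minimal complete set via Lemma~\ref{lem:canonForm}, tighten it via Lemma~\ref{lem:minimal} and Observation~\ref{cor:remainpd}, pass to the pseudo-pattern via Lemma~\ref{lem:pseudopattern} (resp.\ Lemma~\ref{lem:gpseudopatternsp}), and obtain the strict step with $\mu=\{Y\mapsto \lambda w_1.\lambda w_2.\,Y(Y(w_1,w_2),Y(w_1,w_2))\}$ from Corollary~\ref{cor:pseudopattstrictorder}, contradicting minimality. If anything, you are slightly more explicit than the paper's two points left implicit --- the completeness/condition-(3) bookkeeping, and the genuine subtlety that for $\Lambda_{sp}$ the lifted pseudo-pattern need not be an instance of the tight generalization (capture-avoidance blocks it), so the chain must be transported through $\leq_{\Lambda}$ --- which the paper itself handles only at the level of its figure, so your flagged ``delicate point'' matches, rather than falls short of, the paper's own degree of rigor.
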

\begin{proof}
Let us assume that  $C\subseteq \mathcal{G}_{\mathcal{L}}(s,t)$ is minimal and complete. By Lemma~\ref{lem:canonForm}, $C$ contains a pattern-derived generalization $g$. By Lemma~\ref{lem:minimal}, $g$ can be transformed into an $\mathcal{L}$-tight generalization $g'$ that is also $\mathcal{L}$-pattern-derived.  Using Lemma~\ref{lem:pseudopattern} \& \ref{lem:gpseudopatternsp} we can derive a $(g',\mathcal{L})$-pseudo-pattern generalization $g''$ of $g'$. Finally by Corollary~\ref{cor:pseudopattstrictorder}, $g^*= g''\{Y\mapsto \lambda w_1.\lambda w_2. Y(Y(w_1,w_2),Y(w_1,w_2))\}$ is strictly more specific than $g''$. This implies that $g\leq_{\mathcal{L}}g'<_{\mathcal{L}}g^*$, entailing that  $C$ is not minimal.
\end{proof}
\section{conclusion} 
We show that deep AU over $\lambda$-terms is nullary, even under substantial restrictions, and thus completely contrasts earlier results for shallow AU~\cite{DBLP:conf/rta/CernaK19}. Furthermore, our nullarity result is inherited by the more expressive type systems of the \textit{ $\lambda$-Cube}~\cite{DBLP:books/daglib/0032840} such as the \textit{calculus of constructions} over which pattern anti-unification was investigated~\cite{pfenning1991unification}. While this makes deep AU significantly less useful in practice, we have not investigated \textit{linear} variants (each variable only occurs once)  that may allow for constructible minimal complete sets. However, the unitarity of top-maximal shallow AU is unlikely to be preserved.

\vspace{.5em}
\noindent\textbf{Acknowledgements:} We would like to thank Temur Kutsia, Martin Suda, and Chad Brown for comments and suggestions that improved this work. 
\bibliographystyle{ACM-Reference-Format}
\bibliography{ref}

\end{document}